\title{Graph Reconstruction with a Connected Components Oracle} 
\titlerunning{Graph Reconstruction with a Connected Components Oracle} 
\author{Juha Harviainen}{Department of Computer Science, University of Helsinki, Finland}{juha.harviainen@helsinki.fi}{https://orcid.org/0000-0002-4581-840X}{}
\author{Pekka Parviainen}{Department of Informatics, University of Bergen, Norway}{pekka.parviainen@uib.no}{}{This work was supported by L. Meltzers Høyskolefond.}
\authorrunning{J. Harviainen and P. Parviainen} 
\keywords{graph reconstruction, parameterized complexity, query complexity} 
\theoremstyle{definition}
\newtheorem{fact}{Fact}
\newcommand{\OO}{\mathcal{O}}
\newcommand{\gr}{\textsc{\textup{GR}}\xspace}
\newcommand{\supgraph}[0]{G^+}
\DeclareMathOperator{\CC}{CC}
\DeclareMathOperator{\Sep}{Sep}
\DeclarePairedDelimiter\ceil{\lceil}{\rceil}
\newcommand{\defprob}[4]{
\begin{tcolorbox}[colback=gray!5!white,colframe=gray!75!black]
  \vspace{-1mm}
  \begin{tabular*}{\textwidth}{@{\extracolsep{\fill}}lr} #1  \\ \end{tabular*}
  {\bf{Input:}} #2  \\
  {\bf{Question:}} #3
  \vspace{-1mm}
\end{tcolorbox}
}
\begin{document}

\maketitle

\begin{abstract}
In the \textsc{Graph Reconstruction} (\gr) problem, the goal is to recover a hidden graph by utilizing some oracle that provides limited access to the structure of the graph. The interest is in characterizing how strong different oracles are when the complexity of an algorithm is measured in the number of performed queries. We study a novel oracle that returns the set of connected components (CC) on the subgraph induced by the queried subset of vertices. Our main contributions are as follows:
\begin{enumerate}
    \item For a hidden graph with $n$ vertices, $m$ edges, maximum degree $\Delta$, and treewidth $k$, \gr can be solved in $\OO(\min\{m/\log m, \Delta^2, k^2\} \cdot \log n)$ CC queries by an adaptive randomized algorithm.
    \item For a hidden graph with $n$ vertices and degeneracy $d$, \gr can be solved in $\OO(d^2 \log^2 n)$ CC queries by an adaptive randomized algorithm.
    \item For a hidden graph with $n$ vertices, $m$ edges, maximum degree $\Delta$, and treewidth $k$, no algorithm can solve \gr in $o(\min\{m, \Delta^2, k^2\})$ CC queries.
\end{enumerate}
\end{abstract}

\maketitle

\section{Introduction}

The \textsc{Graph Reconstruction} (\gr) problem of discovering the structure of an unknown graph has been investigated extensively from both theoretical \cite{Angluin08,Konrad24} and practical \cite{Chow68,Karger01,Koller09,Korhonen24} point of view with applications to, for example, bioinformatics \cite{Reyzin07}. There, we are given the set of vertices of a \emph{hidden graph} and access to an \emph{oracle} that can be \emph{queried} to obtain information about the structure of the graph.

The \gr problem has also been studied within the machine learning community. Specifically, structure learning of probabilistic graphical models~\cite{Koller09} is a special case of \gr. Probabilistic graphical models are representations of multivariate probability distributions. They use a graph structure to express dependencies among variables. The vertices of the graph correspond to the random variables of the model, and the relationships between variables are encoded by the edge structure of a graph. In structure learning, one is given samples from the (unknown) distribution and the goal is to reconstruct the graph structure of the data-generating distribution. 
One approach for learning this structure is then to utilize statistical conditional independence tests that state whether two variables are conditionally independent of each other when conditioned by some subset of other variables. Essentially, these tests can be seen as calls to a specific oracle depending on which particular graphical model is in question. For Markov networks, where the graph is undirected, one gives the oracle a triple $(u, v, S)$ and the oracle outputs whether the corresponding two vertices $u$ and $v$ would be in different connected components if the vertices for the variables $S$ were to be removed. With Bayesian networks, whose structure is a directed acyclic graph, the independencies are captured by a more involved structural notion of $d$-separation. Ideally, one would like to discover the structure of a model using only a moderate number of queries, possibly under constraints on the size of the queries or the structure of the hidden graph~\cite{Chow68,Karger01,Koller09,Korhonen24}.

The query complexity of \gr has been studied for varying oracles and families of graphs. One main line of research has focused on the independent set (IS) oracle that only reports whether the queried subset is an independent set. This oracle has been applied to learning matchings \cite{Alon04,Beigel01}, cycles \cite{Grebinski98}, stars \cite{Alon05}, cliques \cite{Alon05}, and general graphs \cite{Abasi19,Angluin08}. Other studied oracles include betweenness~\cite{Abrahamsen16,Rong22}, distance~\cite{Beerliova06,Kannan18,Mathieu23,Rong21}, edge counting~\cite{Bouvel05,Grebinski00,Reyzin07}, shortest path oracles~\cite{Reyzin07,Sen10}, and connected component counting oracle~\cite{Black25}.

From an information-theoretical point of view, many of the previous oracles give only few bits of information per query, and thus struggle in reconstructing larger families of graphs in a small number of queries. Konrad et al.~\cite{Konrad24} therefore recently proposed a more powerful oracle that returns a maximal independent set. Notably, their algorithms are able to reconstruct hidden graphs with bounded maximum degree in a logarithmic number of queries in the number of vertices because of the increased amount of information the oracle provides. The lower bounds for the query complexity with the oracle were later refined by Michel and Scott~\cite{Michel25}.

In this paper, we study the connected components (CC) oracle, which returns the sets of vertices in each connected component of the subgraph induced by the queried set. This is thus a stronger version of the connected component counting ($\#$CC) oracle of Black et al.~\cite{Black25}, which only returns the number of the connected components.
One can also see the CC oracle as a stronger version of the separation oracle used in learning Markov networks\footnote{To answer a $\Sep(a, b, S)$ query, one just conducts a $\CC(V\setminus S)$ query and check whether $a$ and $b$ are in the same connected component. Essentially, this means that a $\CC (V\setminus S)$ query answers to $\Sep(a, b, S)$ queries for all pairs $a, b \in V\setminus S$.}. We then investigate how powerful this CC oracle is as a function of classical graph parameters such as the maximum degree, the \emph{treewidth}, and the \emph{degeneracy} of the hidden graph, when the complexity is measured in terms of the number of performed queries. 

{\bf Desired properties of an algorithm.} There are several desirable properties for the algorithms, of which we focus on \emph{determinism} and \emph{non-adaptivity} similarly to previous works. An algorithm is deterministic if it does not utilize randomness, and otherwise it is \emph{randomized}. A deterministic algorithm should always give the correct output whereas a randomized algorithm has to succeed with \emph{high probability}, that is, the probability should tend to $1$ as the input size increases.
Randomized algorithms can often be made deterministic by \emph{derandomization} at the expense of slightly increased complexity by, for example, deterministically constructing a family of query sets with some combinatorial properties. 

Adaptivity refers on whether the performed queries can depend on the outcomes of the previous queries; a non-adaptive algorithm thus first constructs a set of (possibly randomized) queries, then performs all of them, and finally tries to recover the hidden graph based on the outcomes. The notion of having $r$ \emph{rounds of adaptivity} gives a more granular measure of adaptivity, where we are allowed to construct $r$ sets of queries such that the queries in the $i$th set can depend on the outcomes of the queries in the previous sets. Without any prior knowledge about the properties of the graph, non-adaptive algorithms tend to need a polynomial number of queries in the number of vertices, since the constructed set of queries needs to work correctly (with high probability) for all possible graphs.  

{\bf Our contributions.} We start by devising several algorithms for the \gr problem with a CC oracle. A summary of the upper bounds is shown in Table~\ref{table:ub}. We present results parameterized by four graph parameters: number of edges $m$, maximum degree $\Delta$, treewidth $k$, and degeneracy $d$. With all these parameters, we provide adaptive algorithms (both deterministic and randomized) whose query complexity is polynomial with respect to the parameter and (poly)logarithmic with respect to the number of vertices. When the maximum degree of the hidden graph is known, we also provide a non-adaptive algorithm. The algorithms for $m$, $\Delta$, and $k$ can be combined, giving a randomized adaptive algorithm that uses at most $\OO(\min\{m/\log m, \Delta^2, k^2\} \cdot \log n)$ queries (Theorem~\ref{thm:combination}).
Our algorithm for graphs with bounded degeneracy uses $\OO(d^2 \log^2 n)$ queries, so it is mildly worse in terms of $n$. However, the degeneracy of a graph is always the smallest one of all the considered parameters, and so this algorithm is applicable to many graph families for which the other algorithms require polynomially many queries in $n$. For example, planar graphs have degeneracy at most five~\cite{Lick70} but can have unbounded maximum degree and treewidth. 

We also study lower bounds; a summary is shown in Table~\ref{table:lb}. Black et al.\ \cite{Black25} showed that it is not possible to design efficient non-adaptive algorithms for the $\#$CC queries, and we note that their result applies also for CC queries. In other words, non-adaptive algorithms require $\Omega (n^2)$ CC queries in the worst case if no additional information is given (Corollary~\ref{cor:lower_bound_n}), even if we know that the graph has treewidth $2$ and thereby has at most $\OO (n)$ edges. There is a peculiar observation that very little adaptivity can help. Algorithm~\ref{alg:gr_tw} has two rounds of adaptivity and query complexity $\OO (k^2 \log n)$ implying that adding one round of adaptivity reduces the dependence on $n$ from $n^2$ to $\log n$.
Furthermore, we show that no adaptive algorithm can solve the problem for all instance by performing $o(m)$, $o(\Delta^2)$, $o(k^2)$, or~$o(d^2)$ CC queries (Theorem~\ref{thm:lower_bound_mkd}). Thus, our upper bounds are within a (poly)logarithmic factor in $n$ from the lower bound.

\begin{table}
    \centering
    \caption{Summary of our upper bounds. The first corollary follows directly from Black et al.~\cite{Black25}.} \label{table:ub}
    \begin{tabular}{ccccc}
        \toprule
        Theorem & Complexity & Randomized & Adaptive & Comments\\
        \midrule
        Corollary~\ref{thm:edge} & $\OO\big(m \log(n) / \log(m)\big)$ & no & yes & -\\
        Theorem~\ref{thm:max_deg_randomized} & $\OO(\Delta^2 \log n)$ & yes & no & known $\Delta$\\
        Theorem~\ref{thm:max_deg_deterministic} & $\OO\big(\Delta^3 \log(n / \Delta)\big)$ & no & no & known $\Delta$, existence result\\
        Theorem~\ref{thm:max_deg_unknown} & $\OO\big(\Delta^2 \log n \big)$ & yes & yes & -\\
        Theorem~\ref{thm:max_deg_unknown} & $\OO\big(\Delta^3 \log(n / \Delta)\big)$ & no & yes & existence result\\
        Theorem~\ref{thm:tw_unknown} & $\OO(k^2 \log n)$ & yes & yes & -\\
        Theorem~\ref{thm:tw_unknown} & $\OO(k^3 \log n)$ & no & yes &  existence result\\
        Theorem~\ref{thm:degeneracy_unknown} & $\OO(d^2 \log^2 n)$ & yes & yes & -\\
        Theorem~\ref{thm:degeneracy_unknown} & $\OO\big(d^3 \log n \log(n / d)\big)$ & no & yes &  existence result\\
        \bottomrule
    \end{tabular}
\end{table}

\begin{table}
\centering
    \caption{Summary of our lower bounds, where the Complexity column states that there is an instance on which $\Omega(\cdot)$ CC queries are needed. A lower bound for randomized algorithms holds for deterministic algorithms, and a lower bound for adaptive algorithms holds for non-adaptive algorithms. The proof of Corollary~\ref{cor:lower_bound_n} is identical to the lower bound proof for the $\#$CC oracle \cite{Black25}.} \label{table:lb}
     \begin{tabular}{ccccc}
        \toprule
        Theorem & Complexity & Randomized & Adaptive & Comments\\
        \midrule
        Theorem~\ref{thm:lower_bound_mkd} & $\Omega (m)$ & yes & yes & -\\
        Theorem~\ref{thm:lower_bound_mkd} & $\Omega (k^2)$ & yes & yes & -\\
        Theorem~\ref{thm:lower_bound_mkd} & $\Omega (\Delta^2)$ & yes & yes & -\\
        Theorem~\ref{thm:lower_bound_mkd} & $\Omega (d^2)$ & yes & yes & -\\
        Corollary~\ref{cor:lower_bound_n} & $\Omega (n^2)$ & yes & no & even if $k=2$ and $m = \OO(n)$\\
        \bottomrule
    \end{tabular}
\end{table}

Finally, we compare the CC oracle with other oracles. First, we show that getting explicit information about the connected components rather than just their number \cite{Black25} leads to more efficient algorithms (Theorem~\ref{thm:counting-is-weak}). Next, we consider the maximal independent set (MIS) oracle introduced by Konrad et al.\ \cite{Konrad24}. While there are instances where the MIS oracle needs $\OO (n)$ queries but $\OO (\log n)$ queries are sufficient for the CC oracle (Theorem~\ref{thm:cc_powerful}), these two oracles seem to be roughly equally powerful in the sense that both upper and lower bounds for $m$ and $\Delta$ are nearly the same for both types of oracles. However, CC oracles and MIS oracles are different and neither of them can be efficiently simulated with the other (Theorems~\ref{thm:sim_cc} and~\ref{thm:sim_mis}). We also compare the CC oracle and the separation oracle, showing that simulating the former oracle with the latter one requires a quadratic number of queries in the size of the queried set (Theorem~\ref{thm:sep_ub}).

\section{Notation and Preliminaries}

An {\em undirected graph} is a pair $G = (V(G), E(G))$ where $V(G)$ is the vertex set of $G$ and $E(G)$ is the edge set of $G$. If there is no ambiguity about which graph we refer to, we simply write $G = (V, E)$. An edge between vertices $u$ and $v$ is denoted by $uv$; note that edges are undirected and ordering does not matter, that is, $uv$ and $vu$ are the same edge. We use $n = |V(G)|$ to denote the number of vertices and $m = |E(G)|$ to denote the number of edges in a graph. If there is no edge between vertices $u$ and $v$, we call $uv$ a \emph{non-edge}. 

Vertex $u$ is a neighbor of $v$ in $G$ if there is an edge $uv \in E(G)$. We denote the set of neighbors in $G$ by $N_G(v)$ or simply by $N(v)$ if the graph $G$ is clear from the context. The degree of a vertex $v$ is the number of its neighbors. The maximum degree $\Delta$ of a graph is the maximum out of the degrees of its vertices.

A graph $H$ is a {\em subgraph} of $G$ if $V(H) \subseteq V(G)$ and $E(H) \subseteq E(G)$.
A graph $J$ is a {\em supergraph} of $G$ if $V(G) \subseteq V(J)$ and $E(G) \subseteq E(J)$.
In this paper, all used supergraphs $J$ of $G$ have $V(J) = V(G)$.

The subgraph of $G$ \emph{induced by} $S \subseteq V$, denoted by $G[S]$, has the vertex set $S$ and the edge set
$\big\{uv \in E(G) \colon u, v \in S\big\}$.
A subset of vertices is an \emph{independent set} if the subgraph induced by it has no edges. An independent set is \emph{maximal} if none of its proper supersets are independent.
The \emph{degeneracy} of a graph $d = d(G)$ is the smallest integer such that every induced subgraph has at least one vertex with degree at most $d$.

A {\em connected components} (CC) \emph{oracle} is a black box subroutine that is given a subset $S$ of vertices and returns the set of connected components of $G[S]$, that is, a partition of $S$ such that each subset is a connected component. 

In the \textsc{Graph reconstruction} problem, we are given a vertex set and an oracle, and asked to perform queries on the oracle to reconstruct the edge structure of the graph:

\defprob{\textsc{Graph reconstruction (GR)}}{A vertex set $V$, an oracle $O$}{What is the edge set $E$ of the hidden graph $G = (V, E)$?}{}

In this paper, we focus on using a CC oracle as the oracle and use the number of performed queries to measure the complexity of the algorithms for the \gr problem.

Throughout the paper, we will denote by $\supgraph$ the supergraph of $G$, which is initialized to a complete graph and from which edges are removed as the queries reveal more information.

{\bf Treewidth.} A {\em tree} is a connected graph with no cycles. A \emph{tree decomposition} of a graph $G$ is a pair $(T, \mathcal{X})$, where $T$ is a tree and $\mathcal{X} = \{X_t \colon t \in V(T)\}$ a family of subsets of $V$ called \emph{bags} with the following properties:
\begin{enumerate}
    \item For every vertex $v \in V(G)$, there is a bag $X_t$ with $v \in X_t$;
    \item For every edge $uv \in E(G)$, there is a bag $X_t$ with $u, v \in X_t$; and
    \item For every vertex $v \in V(G)$, the subgraph of $T$ induced by its vertices $t$ with $v \in X_t$ is connected.
\end{enumerate}
The \emph{width of a tree decomposition} is the size of its largest bag minus one. The \emph{treewidth of a graph} is the smallest width among its tree decompositions. We will denote the treewidth of the hidden graph $G$ by $k$.
A maximal graph with treewidth $k$ is called a {\em $k$-tree}.

A pair of sets of vertices $(A, B)$ with $A \cup B = V(G)$ is called a {\em separation} if there are no edges between $A \setminus B$ and $B\setminus A$. The set $A\cap B$ is then called a {\em separator} and its {\em order} is $|A\cap B|$. In other words, removal of the separator $A \cap B$ would partition the graph into two separate connected components $G[A\setminus B]$ and $G[B\setminus A]$. A separator $S$ is {\em $\alpha$-balanced} if the size of every connected component in $G[V \setminus S]$ is at most $\alpha\cdot |V(G)|$.

Let $f(n)$ be a function from positive integers to nonnegative real numbers. Recall that $f(n) = o(g(n))$ for some function $g$ if $f(n)/g(n)$ tends to $0$ as $n$ tends to infinity, and $f(n) = \omega(g(n))$ if $f(n)/g(n)$ tends to infinity.

The following lemma will be useful later. Various versions of it have appeared in the literature for constructing sets that include and exclude subsets of elements.
\begin{lemma} \label{lemma:selection}
    Let $V$ be a set and let $U, W \subseteq V$ its disjoint subsets of size $q$ and $p$, respectively. Suppose that we sample a set $Q \subseteq V$ such that $v \in Q$ with probability $1/(ap + b)$ independently of each other, where $a > 0$ and $b \ge 0$ are constants with $ap + b > 1$ for all $p \ge 1$.
    Then, the probability that all elements of $U$ are in $Q$ and none of elements of $W$ are in $Q$ is at least $\left(\frac{1}{ap+b}\right)^q \cdot \min\{\mathrm{e}^{-1/a}, \mathrm{e}^{-1/(a+b-1)}\}$.
    In particular, the probability is $\Omega(1/p^q)$ for constant $q$.
\end{lemma}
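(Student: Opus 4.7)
By independence of the Bernoulli indicators for membership in $Q$, the probability of interest factors as
\[
P \;=\; \left(\frac{1}{ap+b}\right)^{\!q} \cdot \left(1 - \frac{1}{ap+b}\right)^{\!p},
\]
since the $q$ vertices of $U$ must all be sampled and the $p$ vertices of $W$ must all be avoided (and $U, W$ are disjoint). The first factor already matches the statement, so the whole problem reduces to lower-bounding the second factor by $\min\{\mathrm{e}^{-1/a}, \mathrm{e}^{-1/(a+b-1)}\}$ uniformly in $p \geq 1$.

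For that, the plan is to apply the elementary inequality $\ln(1-x) \geq -x/(1-x)$ valid on $(0,1)$ with $x = 1/(ap+b)$, which converts $(1-1/(ap+b))^p$ into $\exp(-p/(ap+b-1))$. The hypothesis $ap+b > 1$ at $p = 1$ is exactly what forces $a+b-1 > 0$, so the denominator $ap+b-1$ is positive for every $p \geq 1$ and the bound is well defined. It then remains to show that the function $h(p) := p/(ap+b-1)$ is bounded above on $[1,\infty)$ by $\max\{1/a, 1/(a+b-1)\}$, which after exponentiating flips to the desired minimum.

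The last step I would handle by a one-line differentiation: $h'(p) = (b-1)/(ap+b-1)^2$, so $h$ is monotone with the sign of $b-1$. Hence its supremum over $p \geq 1$ is attained at one endpoint, giving $h(1) = 1/(a+b-1)$ when $b < 1$ and the limit $1/a$ as $p \to \infty$ when $b \geq 1$. Combining the two cases produces the claimed $\min$, and the ``in particular'' statement is immediate because $(ap+b)^{-q} = \Omega(p^{-q})$ for constant $q$ while the remaining factor is a positive constant independent of $p$.

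\textbf{Main obstacle.} The computation is largely routine, but it does require picking the right elementary inequality: the cleaner bound $(1-1/t)^t \geq \mathrm{e}^{-1}$ alone would only yield $\mathrm{e}^{-p/(ap+b)}$, which is not tight enough when $b \neq 1$. The one conceptual point is recognising that the two branches of the monotonicity analysis of $h$ correspond exactly to the two terms in the $\min$, so that a single clean bound works uniformly in $p$ regardless of whether $b$ is larger or smaller than $1$.
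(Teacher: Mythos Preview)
Your proposal is correct and follows essentially the same approach as the paper: factor the probability, apply the inequality $1-1/x \geq \mathrm{e}^{-1/(x-1)}$ (equivalently $\ln(1-x)\ge -x/(1-x)$), and then analyze the monotonicity of $p/(ap+b-1)$ via its derivative $(b-1)/(ap+b-1)^2$ to identify the extremal values at $p=1$ and $p\to\infty$. One small slip in your aside: the inequality $(1-1/t)^t \geq \mathrm{e}^{-1}$ goes the wrong way (the sequence increases to $\mathrm{e}^{-1}$), but this does not affect your main argument.
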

\begin{proof}
    We have that
    \begin{eqnarray*}
        \Pr(U \subseteq Q \text{ and } W \cap Q = \emptyset) & = & \left(\frac{1}{ap+b}\right)^q \left(1 - \frac{1}{ap + b}\right)^p\\
        & \geq & \left(\frac{1}{ap+b}\right)^q \cdot \left(\mathrm{e}^{-1 / (ap + b - 1)}\right)^p\\
        & = & \left(\frac{1}{ap+b}\right)^q \cdot \mathrm{e}^{-p / (ap + b - 1)},
    \end{eqnarray*}
    where the inequality follows from the fact that $1 - 1/x \geq \mathrm{e}^{-1/(x - 1)}$ for every $x\geq 1$. 
    The derivative of $-p/(ap+b-1)$ with respect to $p$ is $(1 - b) / (ap + b - 1)^2$, which is zero if and only if $b = 1$, in which case $\mathrm{e}^{-p/(ap+b-1)}$ is $\mathrm{e}^{-a}$ everywhere. For other values of $b$, $\mathrm{e}^{-p/(ap+b-1)}$ is minimized either when $p = 1$ or $p$ tends to infinity, corresponding to values $\mathrm{e}^{-1/(a+b-1)}$ and $\mathrm{e}^{-1/a}$, respectively. 
\end{proof}

\section{Algorithms}

In this section, we will provide upper bounds for \gr with a CC oracle.

\subsection{Bounded Number of Edges}

Recently, Black et al.~\cite{Black25} showed an algorithm utilizing an oracle that returns the number of connected components for the subgraph induced by the query set $S$ can reconstruct any hidden graph on $m$ edges with $\OO\big(m \log(n) / \log(m)\big)$ queries. Since our oracle can trivially simulate theirs, the same upper bound follows.
\begin{corollary}\label{thm:edge}
    There is an adaptive deterministic algorithm that solves \gr with $\OO\left(\frac{m \log(n)}{\log(m)}\right)$ CC queries.
\end{corollary}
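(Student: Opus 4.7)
The plan is to reduce directly to the algorithm of Black et al.~\cite{Black25} for the $\#$CC oracle. Their result states that the hidden graph can be reconstructed using $\OO(m \log(n)/\log(m))$ adaptive deterministic queries to a black box that, given a set $S \subseteq V$, returns the integer $|\CC(G[S])|$. I will invoke this algorithm verbatim and only check that each oracle access can be serviced by the CC oracle available to us.

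The key and only observation is that the CC oracle is strictly stronger than the $\#$CC oracle: a single $\CC(S)$ query returns a partition of $S$ whose blocks are exactly the vertex sets of the connected components of $G[S]$, and the cardinality of this partition is $|\CC(G[S])|$. Thus every call the Black et al.~algorithm makes to the $\#$CC oracle can be answered by performing a single CC query and discarding all but the number of blocks in the returned partition. The simulation is deterministic, adaptive in exactly the same sense, and incurs no overhead in the query count.

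Plugging this simulation into the algorithm of~\cite{Black25} yields an adaptive deterministic algorithm for \gr with a CC oracle whose total number of queries is bounded by $\OO\big(m \log(n)/\log(m)\big)$, which is precisely the claim. There is no real obstacle here beyond verifying the statement of the cited theorem; since the paper explicitly highlights the simulation in the sentence preceding the corollary, the proof is essentially a one-line reduction.
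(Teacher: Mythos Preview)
Your proposal is correct and matches the paper's own argument exactly: the corollary is stated with no separate proof, and the preceding sentence in the paper simply notes that the CC oracle trivially simulates the $\#$CC oracle of Black et al.~\cite{Black25}, so their $\OO(m\log(n)/\log(m))$ bound carries over. There is nothing to add.
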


\subsection{Bounded Maximum Degree}

For graphs with a bounded maximum degree $\Delta$, we follow the strategy of Konrad et al.~\cite{Konrad24} of performing queries on randomized subsets of vertices. The main idea is that if there is no edge between $u$ and $v$, both vertices are included in a query set, and none of the neighbors of $u$ are included, then the CC query outputs them in different components.

\begin{algorithm}
\caption{An algorithm for \gr for graphs with bounded maximum degree.}\label{alg:max_deg_randomized}
\KwData{A vertex set $V$, a CC oracle $\CC(\cdot)$, maximum degree $\Delta$}
\KwResult{Hidden graph $G$ with high probability}
Initialize $\supgraph$ to a complete graph on $V$\;
\For{$t = \OO(\Delta^2 \log n)$ \textup{times}}{
    Sample $Q \subseteq V$ s.t. $v \in Q$ with probability $1/(\Delta + 1)$ for all $v \in V$ independently\;
    $\mathcal{C} \gets \CC(Q)$\;
    For all $u, v\in Q$, remove edge $uv$ from $\supgraph$ if $v$ and $u$ are in different components in $\mathcal{C}$\; 
}
\Return $\supgraph$\;
\end{algorithm}

\begin{theorem}\label{thm:max_deg_randomized}
     Algorithm~\ref{alg:max_deg_randomized} is a non-adaptive randomized algorithm that solves \gr with $\OO(\Delta^2 \log n)$ CC queries when $\Delta$ is known. The output of the algorithm is correct with high probability.
\end{theorem}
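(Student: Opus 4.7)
The plan is to show one-sided correctness: every true edge is automatically preserved in $\supgraph$, and every non-edge is removed at least once with high probability. For the first part, observe that whenever $uv \in E(G)$ and both endpoints land in a sampled set $Q$, the edge $uv$ itself belongs to $G[Q]$, so $u$ and $v$ must lie in the same connected component of $G[Q]$; hence $uv$ is never deleted in Line~5 of the algorithm.

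For the second part, I would fix a non-edge $uv$ and analyse the favourable event $\mathcal{E}_{uv}$ that $u, v \in Q$ and $N(u) \cap Q = \emptyset$. Under $\mathcal{E}_{uv}$, vertex $u$ is isolated in $G[Q]$, so $u$ and $v$ lie in different components and $uv$ is removed in that round. Since $|N(u)| \leq \Delta$ and $v \notin N(u)$ (because $uv$ is a non-edge), I apply Lemma~\ref{lemma:selection} with $U = \{u, v\}$, a padded $W$ of size exactly $\Delta$ containing $N(u)$ and disjoint from $U$, parameters $a = b = 1$, $p = \Delta$, $q = 2$, and sampling probability $1/(\Delta+1)$. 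The lemma then yields $\Pr(\mathcal{E}_{uv}) \geq (1/(\Delta+1))^2 \cdot \mathrm{e}^{-1} = \Omega(1/\Delta^2)$; the padding is harmless because shrinking $W$ can only increase $\Pr(\mathcal{E}_{uv})$.

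Running $t = C \Delta^2 \log n$ independent iterations with a sufficiently large constant $C$, the probability that $\mathcal{E}_{uv}$ fails in every round is at most $(1 - c/\Delta^2)^t \leq \mathrm{e}^{-cC \log n} = n^{-cC}$. A union bound over the at most $\binom{n}{2} < n^2$ non-edges bounds the overall failure probability by $n^{2 - cC} = o(1)$, so $\supgraph = G$ with high probability. Non-adaptivity is immediate by inspection: each $Q$ depends only on $V$ and $\Delta$, and the edge deletions in Line~5 depend only on the corresponding oracle answer and can be performed in a post-processing phase after all $t$ queries have been issued. The total number of CC queries is exactly $t = \OO(\Delta^2 \log n)$.

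The only mild obstacle is ensuring the probability bound applies uniformly across non-edges of different neighbourhood sizes and that Lemma~\ref{lemma:selection} is invoked with consistent parameters; this is handled by the padding of $W$ to size $\Delta$ and by the monotonicity noted above, which together let a single worst-case constant $C$ be chosen for all non-edges simultaneously.
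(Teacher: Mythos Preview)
Your proof is correct and follows essentially the same approach as the paper: bound the probability of isolating $u$ from $v$ in a single round via Lemma~\ref{lemma:selection}, then union-bound over all non-edges across $\OO(\Delta^2\log n)$ independent rounds. You are in fact slightly more explicit than the paper in justifying that true edges are never deleted, in handling the case $|N(u)|<\Delta$ via the padding/monotonicity argument, and in spelling out non-adaptivity.
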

\begin{proof}
    Consider an arbitrary non-edge $uv$ of $G$. By Lemma~\ref{lemma:selection}, the probability that $Q$ contains $u$ and $v$ but none of the neighbors of $u$ is at least $1/((\Delta + 1)^2 \cdot \mathrm{e})$. If this occurs, then the output of the CC query must have $v$ and $u$ in distinct connected components. Let $X_i$ be a random Boolean variable that is true if this occurs at least once for the $i$th edge over $t$ independently chosen subsets $Q$, and otherwise false. By the union bound, we have that
    \begin{align*}
        \Pr\left(\bigwedge_{i=1}^{m} X_i\right)
        &\ge 1 - \sum_{i=1}^{m} \Pr(\overline{X_i})\\
        &\ge 1 - m \cdot \left(1 - \frac{1}{(\Delta + 1)^2 \cdot \mathrm{e}}\right)^t.
    \end{align*}

    We want to choose a value for $t$ such that $m \cdot \left(1 - 1/((\Delta + 1)^2 \cdot \mathrm{e})\right)^t \le 1 / n$, which holds if $\left(1 - 1/((\Delta + 1)^2 \cdot \mathrm{e})\right)^t \le 1 / n^3$. Since $(1 - 1/x)^x \le 1/\mathrm{e}$ for $x \ge 1$, choosing $t = \OO(\Delta^2 \log n)$ suffices for finding all edges with high probability.
\end{proof}

\subsubsection{Derandomization}

Next, we will derandomize Algorithm~\ref{alg:max_deg_randomized}. Our schemes follow straightforwardly from the one presented by Konrad et al.\ \cite{Konrad24}.

In Algorithm~\ref{alg:max_deg_randomized}, we remove an edge $uv$ if $u$ and $v$ are in different connected components returned by the query $\CC(Q)$ for some $Q$. If $u$ and $v$ are not connected by an edge in $G$, then there exists a set $\{ w_1, \ldots, w_p\}$ such that $u$ and $v$ are in different connected components in $\CC(Q)$ for any $Q$ where $u, v \in Q$ and $w_i \not\in Q$ for all $i$. Following Konrad et al.\ \cite{Konrad24}, we define a {\em witness} as follows.
\begin{definition}[Witness] \label{def:witness}
    Let $V$ be a set of $n$ vertices and $0 \leq p \leq n-2$ be an integer. Then, the tuple $(\{u, v\}, \{ w_1, \ldots, w_p\})$ with $u, v, w_1, \ldots, w_p \in V$ being distinct vertices is called a witness, and we denote the set of all witnesses by $\mathcal{W}$.
\end{definition}

The following lemma gives an upper bound for the number of witnesses.
\begin{lemma} \label{lemma:witness_count}
    The number of witnesses $|\mathcal{W}|$ is bounded by
    \begin{eqnarray*}
        |\mathcal{W}| & \leq & n^2 \cdot \Big(\mathrm{e} \cdot \frac{n - 2}{p}\Big)^p
    \end{eqnarray*}
\end{lemma}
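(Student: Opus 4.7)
The plan is a direct counting argument combined with the standard upper bound on binomial coefficients. By Definition~\ref{def:witness}, a witness is determined by an unordered pair $\{u,v\}$ of distinct vertices together with a set $\{w_1, \ldots, w_p\}$ of $p$ additional distinct vertices, all disjoint from $\{u,v\}$. Thus I would first choose the pair and then choose the remaining $p$ vertices from the remaining $n-2$ vertices of $V$, giving the exact count
\begin{equation*}
|\mathcal{W}| = \binom{n}{2} \cdot \binom{n-2}{p}.
\end{equation*}

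Next, I would bound each factor. For the first factor, $\binom{n}{2} = n(n-1)/2 \le n^2$. For the second factor, I would invoke the standard inequality $\binom{a}{b} \le (\mathrm{e} a / b)^b$, which is an immediate consequence of $\binom{a}{b} \le a^b / b!$ and Stirling's approximation $b! \ge (b/\mathrm{e})^b$. Applied with $a = n-2$ and $b = p$, this yields
\begin{equation*}
\binom{n-2}{p} \le \left(\mathrm{e} \cdot \frac{n-2}{p}\right)^p.
\end{equation*}
Multiplying the two bounds gives exactly the claimed inequality.

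There is essentially no obstacle here; the only mild subtlety is handling the edge case $p = 0$, where the second binomial coefficient is $1$ and the expression $(\mathrm{e}(n-2)/p)^p$ should be interpreted as $1$ (the usual convention $0^0 = 1$), so the bound still holds trivially. The result will be used later, presumably in a union bound over witnesses to derandomize Algorithm~\ref{alg:max_deg_randomized}, where having the factor $(n-2)/p$ rather than $n/p$ is not important but the exponential dependence on $p$ is exactly what enables the derandomization to succeed.
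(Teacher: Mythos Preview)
Your proposal is correct and follows essentially the same approach as the paper: count witnesses as $\binom{n}{2}\binom{n-2}{p}$, bound the first factor by $n^2$, and apply the standard inequality $\binom{a}{b} \le (\mathrm{e}a/b)^b$ to the second. The paper's proof is in fact terser and omits the $p=0$ discussion, but the argument is identical.
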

\begin{proof}
    We use the bound ${a \choose b} \leq (\mathrm{e} \cdot \frac{a}{b})^b$ and get
    \begin{eqnarray*}
        {n \choose 2} {n - 2 \choose p} & \leq & n^2 \cdot \Big(\mathrm{e} \cdot \frac{n - 2}{p}\Big)^p.
    \end{eqnarray*}
\end{proof}

\begin{definition}[$p$-Query-Scheme] \label{def:query_scheme}
    Let $V$ be a set of $n$ vertices and $0 \leq p \leq n-2$ be an integer. The set $\mathcal{Q} = \{ Q_1, \ldots, Q_\ell\}$ is a $p$-Query-Scheme of size $\ell$ if, for every witness $(\{u, v\}, \{ w_1, \ldots, w_p\}) \in \mathcal{W}$, there exists a query $Q_i \in \mathcal{Q}$ such that
    \begin{enumerate}
        \item $u, v\in Q_i$ and
        \item $\{w_1, \ldots, w_p\} \cap Q_i = \emptyset$.
    \end{enumerate}
\end{definition}

The following lemma gives an upper bound for the size of a $p$-Query-Scheme.
\begin{lemma}\label{lemma:k-query}
    There exists a $p$-Query-Scheme of size $\OO(p^3 \log n/p)$.
\end{lemma}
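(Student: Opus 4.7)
My plan is to prove Lemma~\ref{lemma:k-query} by the probabilistic method, mirroring the analysis of Algorithm~\ref{alg:max_deg_randomized}. I would construct a random candidate scheme $\mathcal{Q} = \{Q_1, \ldots, Q_\ell\}$ by sampling each $Q_i$ independently, including every vertex $v \in V$ in $Q_i$ with probability $1/(p+1)$, and then choose $\ell$ large enough that the random $\mathcal{Q}$ is a valid $p$-Query-Scheme with positive probability; the existence of one realization then yields the lemma.

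Fix an arbitrary witness $(\{u,v\}, \{w_1, \ldots, w_p\}) \in \mathcal{W}$. Applying Lemma~\ref{lemma:selection} with $U = \{u,v\}$, $W = \{w_1, \ldots, w_p\}$, $q = 2$, and $a = b = 1$, a single $Q_i$ simultaneously satisfies $\{u,v\} \subseteq Q_i$ and $\{w_1, \ldots, w_p\} \cap Q_i = \emptyset$ with probability at least $c/(p+1)^2$ for some absolute constant $c > 0$. Since the $\ell$ sets are independent, the probability that no $Q_i$ is valid for this particular witness is at most $\exp(-c\ell/(p+1)^2)$. Combining this estimate with the witness-count bound $|\mathcal{W}| \le n^2 (\mathrm{e}(n-2)/p)^p$ of Lemma~\ref{lemma:witness_count} via a union bound, the probability that $\mathcal{Q}$ misses at least one witness is at most
\[
n^2 \Big(\frac{\mathrm{e}(n-2)}{p}\Big)^p \exp\!\Big(\!-\frac{c\ell}{(p+1)^2}\Big).
\]

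Taking logarithms, it suffices to arrange $c\ell/(p+1)^2 > 2\log n + p \log(\mathrm{e}(n-2)/p)$. The right-hand side is $\OO(p \log(n/p) + \log n) = \OO(p \log(n/p))$ in the intended regime $p \ll n$, so solving for $\ell$ gives $\ell = \OO(p^3 \log(n/p))$, which pushes the failure probability strictly below $1$ and hence guarantees the existence of a $p$-Query-Scheme of this size. The main obstacle I expect is routine bookkeeping at the extremes: absorbing the additive $\log n$ for small constant $p$ (which is automatic since $p \log(n/p) = \Omega(\log n)$ for $p \ge 1$ and large $n$) and avoiding the degeneration of $\log(n/p)$ when $p$ approaches $n$, a case that does not arise in the targeted applications where $p$ corresponds to a small graph parameter such as $\Delta$, $k$, or $d$.
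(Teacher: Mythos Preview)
Your proposal is correct and follows essentially the same approach as the paper: sample $\ell$ random query sets with inclusion probability $1/(p+1)$, bound the per-witness success probability via Lemma~\ref{lemma:selection}, union-bound over $|\mathcal{W}|$ using Lemma~\ref{lemma:witness_count}, and solve for~$\ell$. The paper carries out exactly this computation (with the explicit constant $c = 1/\mathrm{e}$), so there is nothing to add.
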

\begin{proof}
    Suppose we have a collection of $\ell$ sets $\mathcal{Q} = \{ Q_1, \ldots, Q_\ell \}$ with $Q_i\subseteq V$ for all $i$. Suppose further that each vertex has probability $1/(p+1)$ to be in the set $Q_i$ independently for each set. That is, $\Pr(v\in Q_i) = 1 / (p+1)$ for all $v$ and $i$.

    Next, we will show that if $\ell = \OO (p^3 \log (n/p) )$, then the probability that $\mathcal{Q}$ is a $p$-Query-Scheme is larger than $0$.

    By Lemma~\ref{lemma:selection}, the probability that a set $Q_i$ contains a witness $(\{u, v\}, \{ w_1, \ldots, w_p\})$ is at least $1/((p+1)^2 \cdot \mathrm{e})$.
    Then, the probability that none of the sets in $\mathcal{Q}$ contains the witness $(\{u, v\}, \{ w_1, \ldots, w_p\})$ is
    \begin{eqnarray*}
        \Pr(\{u, v\} \not\subseteq Q_i \text{ or } \{w_1,\ldots w_p\} \cap Q_i \neq \emptyset \text{ for all } i) & \leq & \left(1 - \frac{1}{(p+1)^2} \cdot \frac{1}{\mathrm{e}}\right)^{\ell}\\
        & \leq & \exp\left(- \frac{\ell}{(p+1)^2 \cdot \mathrm{e}}\right),
    \end{eqnarray*}
    where the latter inequality follows from the fact $1 + x \leq \mathrm{e}^x$.

    By union bound, the probability that there exists a witness that is not considered for some pair $u$ and $v$ is at most
    \begin{equation}
        n^2 \left(\mathrm{e} \cdot \frac{n - 2}{p}\right)^p \exp\left(- \frac{\ell}{(p+1)^2 \cdot \mathrm{e}}\right)\,.
    \end{equation}

    To guarantee that the above expression is smaller than $1$, it suffices to choose $\ell$ such that
    \begin{align*}
        n^2 \left(\mathrm{e} \cdot \frac{n -2}{p}\right)^p \exp\left(- \frac{\ell}{(p+1)^2 \cdot \mathrm{e}}\right) & \le 1 & \iff\\
        \exp\left(- \frac{\ell}{(p+1)^2 \cdot \mathrm{e}}\right) & \le \frac{p^p}{n^2 (\mathrm{e} \cdot (n-2))^p} & \iff\\
        - \frac{\ell}{(p+1)^2 \cdot \mathrm{e}} & \le \ln \left( \frac{p^p}{n^2 (\mathrm{e} \cdot (n-2))^p} \right) & \iff\\
        \ell & \ge (p + 1)^2 \cdot \mathrm{e} \cdot p \cdot \ln \left( \frac{n^{2/p} \cdot \mathrm{e} \cdot (n-2)}{p} \right).
    \end{align*}

    For the last inequality to hold, it is enough to set $\ell = \Theta (p^3\log n/p)$.
\end{proof}

We are not aware of any efficient explicit constructions for $p$-Query-Schemes of size $\OO(p^3 \log n/p)$, but there is rich literature on closely related pseudorandom objects like \emph{group testing schemes} and \emph{strongly selective families}---see, for example, the work of Porat and Rothschild \cite{Porat11}.
At the cost of increased size, $(n, \kappa, \kappa^2)$-\emph{splitters} can be utilized to construct a $p$-Query-Scheme in polynomial time. Recall that an $(n, \kappa, \kappa^2)$-splitter for the vertices is a family~$\mathcal{F}$ of functions from the $n$ vertices to $\kappa^2$ colors $\{ 1, \dots, \kappa^2 \}$ such that for any subset of vertices $S$ of size $\kappa$, there is a function that colors the vertices of $S$ with distinct colors.
\begin{lemma}[\cite{Alon95}]\label{lemma:splitter}
    An $(n, \kappa, \kappa^2)$-splitter of size $\kappa^{\OO(1)} \log n$ can be constructed in time $\kappa^{\OO(1)} n \log n$.
\end{lemma}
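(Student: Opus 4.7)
The plan is to prove the lemma in two stages: first establish the existence of a splitter of the claimed size via the probabilistic method, and then derandomize the argument using an efficient explicit construction. For the probabilistic step, I would take $\ell$ i.i.d.\ uniformly random functions $h_1,\dots,h_\ell \colon V \to [\kappa^2]$. For a fixed $\kappa$-subset $S \subseteq V$, a standard birthday-style estimate gives
\[
\Pr\!\left(h_i|_S \text{ is injective}\right) \;=\; \prod_{j=0}^{\kappa-1}\left(1-\frac{j}{\kappa^2}\right) \;\ge\; \left(1-\frac{1}{\kappa}\right)^{\!\kappa} \;\ge\; \frac{1}{4}
\]
for $\kappa \ge 2$, so the probability that none of $h_1,\ldots,h_\ell$ is injective on $S$ is at most $(3/4)^\ell$. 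A union bound over the $\binom{n}{\kappa} \le (\mathrm{e} n/\kappa)^{\kappa}$ possible $\kappa$-subsets shows that $\ell = \Theta(\kappa \log(n/\kappa))$ suffices, giving a splitter of size $\kappa^{\OO(1)} \log n$ with positive probability.

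To obtain an \emph{explicit} construction within the claimed time bound, I would replace the fully random functions by draws from a $\kappa$-wise independent hash family $\mathcal{H}$ from $V$ to $[\kappa^2]$. Such families can be realized by evaluating degree-$(\kappa-1)$ polynomials over a finite field of size $\Theta(\kappa^2 + n)$, so each function is described by $\kappa^{\OO(1)} \log n$ bits and evaluated in $\kappa^{\OO(1)}$ arithmetic operations per point. Because $\kappa$-wise independence preserves the joint distribution on any $\kappa$-subset, the injectivity bound $\ge 1/4$ above carries over unchanged, so the existence argument continues to apply over this restricted sample space. I would then run the method of conditional expectations on the small space: process the $\ell$ slots one at a time, and at each step pick the function that minimizes a pessimistic estimator for the number of $\kappa$-subsets still left unsplit after the current partial family.

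The main technical obstacle is implementing the conditional-expectations step fast enough: a naive approach would enumerate all $\binom{n}{\kappa}$ potentially unsplit subsets at every round, which is prohibitive. The fix is to choose the pessimistic estimator so that it decomposes into a sum of symmetric statistics of the chosen function, each of which can be updated in $\kappa^{\OO(1)}$ time per vertex; the best next function (equivalently, the best next polynomial coefficient to fix in the finite-field construction) can then be identified in $\kappa^{\OO(1)} n$ time. Multiplied by the $\ell = \OO(\kappa \log(n/\kappa))$ rounds, this yields the claimed total construction time $\kappa^{\OO(1)} n \log n$, completing the proof.
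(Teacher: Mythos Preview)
The paper does not prove this lemma; it is quoted from \cite{Alon95} without argument, so there is no in-paper proof to compare against and your proposal must stand on its own.

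Your existence argument via the probabilistic method is correct and yields a splitter of size $\OO(\kappa\log(n/\kappa))$, comfortably within the stated $\kappa^{\OO(1)}\log n$.

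The derandomization step has a genuine gap. You correctly flag that a naive conditional-expectations run would enumerate all currently unsplit $\kappa$-subsets, and you then assert that a suitable pessimistic estimator ``decomposes into a sum of symmetric statistics of the chosen function'' maintainable in $\kappa^{\OO(1)}$ time per vertex. No such estimator is exhibited, and the natural one does not have this property. After $h_1,\dots,h_i$ are fixed, the conditional expectation is (up to the scalar $(1-p)^{\ell-i}$) the number of $\kappa$-subsets on which \emph{every} $h_j$ with $j\le i$ fails to be injective; to choose $h_{i+1}$ you must count, among \emph{those} subsets, how many $h_{i+1}$ also fails on. The number of $\kappa$-subsets on which a \emph{single} fixed function is injective is indeed easy---it is the elementary symmetric polynomial $e_\kappa$ in the color-class sizes---but intersecting with the predicate ``not yet split by $h_1,\dots,h_i$'' reintroduces dependence on the full family of unsplit subsets, and nothing in your sketch explains how to evaluate or update this in time polynomial in $n$ and $\kappa$. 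As written, the claimed $\kappa^{\OO(1)} n\log n$ construction time is not established.

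For orientation, the construction behind \cite{Alon95} is not a derandomization of this kind but a direct explicit one, based on error-correcting codes over an alphabet of size $\Theta(\kappa^2)$: for any $\kappa$ codewords of a code whose relative distance exceeds $1-1/\binom{\kappa}{2}$ there is a coordinate on which all $\kappa$ are pairwise distinct, so the coordinate projections form the splitter. That route sidesteps the bookkeeping problem entirely and is what you would need to invoke (or reproduce) to actually obtain the stated running time.
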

\begin{theorem}
    A $p$-Query-Scheme of size $p^{\OO(1)} \log n$ can be constructed in time $p^{\OO(1)} n \log n$.
\end{theorem}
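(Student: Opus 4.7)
The plan is to instantiate Lemma~\ref{lemma:splitter} with $\kappa = p + 2$, obtaining a family $\mathcal{F}$ of functions $f \colon V \to \{1, \ldots, (p+2)^2\}$ of size $(p+2)^{\OO(1)} \log n$, computable in time $(p+2)^{\OO(1)} n \log n$. From this family, I would build the candidate $p$-Query-Scheme
\[
\mathcal{Q} \;=\; \bigl\{\, f^{-1}(\{c_1, c_2\}) : f \in \mathcal{F},\ c_1, c_2 \in \{1, \ldots, (p+2)^2\},\ c_1 \neq c_2 \,\bigr\},
\]
that is, for every function in the splitter and every unordered pair of distinct colors, take the set of vertices receiving one of the two colors. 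The total number of queries is at most $|\mathcal{F}| \cdot (p+2)^4 = p^{\OO(1)} \log n$, matching the target size.

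For correctness, I would take an arbitrary witness $(\{u, v\}, \{w_1, \ldots, w_p\}) \in \mathcal{W}$. The $p + 2$ vertices $u, v, w_1, \ldots, w_p$ are pairwise distinct by Definition~\ref{def:witness}, so the splitter property guarantees some $f \in \mathcal{F}$ that assigns these $p + 2$ vertices pairwise distinct colors. Setting $c_1 = f(u)$ and $c_2 = f(v)$, we have $u, v \in f^{-1}(\{c_1, c_2\})$, while $f(w_i) \notin \{c_1, c_2\}$ for every $i$ by distinctness, hence $w_i \notin f^{-1}(\{c_1, c_2\})$. The corresponding query $f^{-1}(\{c_1, c_2\}) \in \mathcal{Q}$ therefore satisfies both conditions of Definition~\ref{def:query_scheme} for this witness.

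For the running time, the construction is dominated by the call to Lemma~\ref{lemma:splitter} plus, for each of the $|\mathcal{F}|$ functions, enumerating the $\OO(p^4)$ unordered color pairs and scanning the $n$ vertices to form the preimages; this totals $p^{\OO(1)} n \log n$, as claimed. There is no real obstacle: the argument reduces to a routine bookkeeping application of the splitter guarantee. The only design choices are to set $\kappa = p + 2$ (so that the $p + 2$ distinguished vertices of a full witness can simultaneously receive distinct colors) and to index queries by unordered \emph{pairs} of colors (since only the two ``positive'' vertices need to be forced in, and every $w_i$ merely needs some color different from both $c_1$ and $c_2$, of which there are $(p+2)^2 - 2$ available).
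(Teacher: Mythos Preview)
Your proposal is correct and takes essentially the same approach as the paper: both set $\kappa = p+2$, invoke Lemma~\ref{lemma:splitter} to obtain an $(n,\kappa,\kappa^2)$-splitter, and form the query scheme by taking, for each function and each pair of colors, the preimage of those two colors; correctness follows identically from the splitter guarantee applied to the $p+2$ vertices of an arbitrary witness. The only cosmetic difference is that you restrict to unordered pairs with $c_1 \neq c_2$ (saving a harmless constant factor), and you spell out the running-time bookkeeping a bit more explicitly.
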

\begin{proof}
    Let $\kappa = p + 2$ and construct an $(n, \kappa, \kappa^2)$-splitter $\mathcal{F}$ of size $p^{\OO(1)} \log n$ in time $p^{\OO(1)} n \log n$ by applying Lemma~\ref{lemma:splitter}. Construct now in time $p^{\OO(1)} n \log n$ a family of sets
    \[ \mathcal{Q} = \left\lbrace f^{-1}(i) \cup f^{-1}(j) \colon f \in \mathcal{F}, i, j \in \{1, \dots, \kappa^2 \}\right\rbrace \]
    by considering each function and each pair of colors, and then including a set of vertices colored in either of the colors to $\mathcal{Q}$. We claim that $\mathcal{Q}$ is a $p$-Query-Scheme.

    Consider any witness $(\{u, v\}, \{w_1, \dots, w_p\})$. The splitter has a function $f$ that assigns a distinct color to each of the vertices of the witness. Letting, $i = f(u)$ and $j = f(v)$, the set $f^{-1}(i) \cup f^{-1}(j)$ contains the vertices $u$ and $v$ but none of the vertices $w_1, \dots, w_p$. Since this holds for all witnesses, $\mathcal{Q}$ is a $p$-Query-Scheme.
\end{proof}

By utilizing $\Delta$-Query-Schemes similarly to Konrad et al.\ \cite{Konrad24}, we get the following derandomized algorithm.
\begin{theorem}\label{thm:max_deg_deterministic}
    There exists a non-adaptive deterministic algorithm that solves \gr with $\OO\big(\Delta^3 \log(n / \Delta)\big)$ CC queries when $\Delta$ is known.
\end{theorem}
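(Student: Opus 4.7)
The plan is to instantiate Algorithm~\ref{alg:max_deg_randomized} deterministically by replacing the random subsets with a fixed $\Delta$-Query-Scheme, and to show that the scheme's covering property guarantees that every non-edge is detected.

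First, I would invoke Lemma~\ref{lemma:k-query} with $p = \Delta$ to obtain a $\Delta$-Query-Scheme $\mathcal{Q} = \{Q_1, \ldots, Q_\ell\}$ of size $\ell = \OO(\Delta^3 \log(n/\Delta))$. The algorithm is then: initialize $\supgraph$ to the complete graph on $V$, and for each $Q_i \in \mathcal{Q}$ query $\mathcal{C}_i = \CC(Q_i)$ and remove from $\supgraph$ every edge $uv$ with $u,v \in Q_i$ whose endpoints lie in distinct components of $\mathcal{C}_i$. This is clearly non-adaptive and deterministic, and uses exactly $\ell = \OO(\Delta^3 \log(n/\Delta))$ queries.

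For correctness I would argue the two directions separately. No true edge of $G$ can ever be deleted, since in every queried subgraph $G[Q_i]$ the endpoints of a genuine edge share a connected component. Conversely, let $uv$ be a non-edge of $G$. Since $|N(u)| \le \Delta$, I can pad $N(u)$ with arbitrary vertices from $V \setminus (\{u,v\} \cup N(u))$ to obtain a set $W = \{w_1, \ldots, w_\Delta\}$ of $\Delta$ distinct vertices disjoint from $\{u,v\}$ (using $n \ge \Delta + 2$, which may be assumed, as otherwise the result is trivial). Then $(\{u,v\}, W) \in \mathcal{W}$ is a witness in the sense of Definition~\ref{def:witness}, and by the defining property of the $\Delta$-Query-Scheme there exists $Q_i \in \mathcal{Q}$ with $u, v \in Q_i$ and $W \cap Q_i = \emptyset$. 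In particular, $Q_i$ contains none of the neighbors of $u$ in $G$, so $u$ is isolated in $G[Q_i]$ and therefore lies in a different component from $v$ in $\mathcal{C}_i$. The algorithm consequently removes $uv$ from $\supgraph$, so on termination $\supgraph = G$.

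There is no substantive obstacle here beyond wiring together Lemma~\ref{lemma:k-query} with the edge-deletion loop of Algorithm~\ref{alg:max_deg_randomized}; the only mildly non-routine point is the padding step that turns a possibly-smaller neighborhood into a witness of exact size $\Delta$, and this is immediate from the flexibility in choosing $W \supseteq N(u)$.
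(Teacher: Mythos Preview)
Your proposal is correct and follows essentially the same approach as the paper: replace the random queries in Algorithm~\ref{alg:max_deg_randomized} by a $\Delta$-Query-Scheme from Lemma~\ref{lemma:k-query} and argue that every non-edge is witnessed. Your treatment is in fact slightly more careful than the paper's, which tacitly assumes the scheme excludes $N(u)$ without spelling out the padding of $N(u)$ to a set of size exactly~$\Delta$.
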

\begin{proof}
    We claim that it suffices to construct a $\Delta$-Query-Scheme $\mathcal{Q}$ and to perform those $t = \OO\big(\Delta^3 \log(n / \Delta)\big)$ queries in Algorithm~\ref{alg:max_deg_randomized} instead of the randomized queries.
    
    Consider any non-edge $uv$. By the definition of a $\Delta$-Query-Scheme, there exists a query $Q_i$ that contains both $u$ and $v$, but none of the neighbors of $u$. Thus, the oracle returns $u$ and $v$ in a separate component and the edge $uv$ gets removed from the supergraph~$\supgraph$.
\end{proof}

\subsubsection{Unknown $\Delta$}

By making the algorithms adaptive, we can solve \gr with the asymptotically same number of queries even when $\Delta$ is not given to us beforehand by following the strategy of Konrad et al.~\cite{Konrad24}. Essentially, we guess an upper bound for $\Delta$, and when we get evidence that it is incorrect, we double the value of the upper bound.

\begin{theorem}\label{thm:max_deg_unknown}
    There exist an adaptive randomized algorithm and an adaptive deterministic algorithm that solve \gr with $\OO\big(\Delta^2 \log n\big)$ and $\OO\big(\Delta^3 \log(n / \Delta)\big)$ CC queries, respectively. The output of the randomized algorithm is correct with high probability.
\end{theorem}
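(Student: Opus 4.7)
The plan is to apply the standard doubling trick to the non-adaptive algorithms of Theorems~\ref{thm:max_deg_randomized} and~\ref{thm:max_deg_deterministic}. Starting from a guess $\tilde\Delta = 1$, in each round I would run the chosen non-adaptive algorithm with $\tilde\Delta$ substituted for the maximum degree, producing a supergraph $\supgraph_{\tilde\Delta}$ of $G$. If the maximum degree of $\supgraph_{\tilde\Delta}$ is at most $\tilde\Delta$, output it and halt; otherwise double $\tilde\Delta$ and repeat. Because $\supgraph_{\tilde\Delta}$ is only obtained by removing edges from the initial complete graph, we always have $G \subseteq \supgraph_{\tilde\Delta}$, and the degree check requires no further queries.

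Correctness will follow from showing that the degree test is a sound stopping criterion. Whenever $uv$ is a non-edge of $G$ with $\deg_G(u) \leq \tilde\Delta$, the witness $(\{u,v\}, N_G(u))$ of Definition~\ref{def:witness} is covered by a $\tilde\Delta$-Query-Scheme (deterministic case) and is covered by Algorithm~\ref{alg:max_deg_randomized} with probability at least $1 - n^{-c}$ when the constant hidden inside the $\OO(\tilde\Delta^2 \log n)$ budget is chosen large enough (randomized case). Conditioned on this good event, every non-edge of $G$ that survives in $\supgraph_{\tilde\Delta}$ must sit between two vertices of $G$-degree exceeding $\tilde\Delta$, and hence of $\supgraph_{\tilde\Delta}$-degree exceeding $\tilde\Delta$; thus if the test $\Delta(\supgraph_{\tilde\Delta}) \leq \tilde\Delta$ succeeds, then necessarily $\supgraph_{\tilde\Delta} = G$. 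Moreover, as soon as $\tilde\Delta \geq \Delta$ the test is guaranteed to pass (deterministically or with high probability), so the procedure halts after at most $\lceil \log \Delta \rceil + 1$ rounds.

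For the query budget, the dominant term is the final round, and summing the geometric series gives
\begin{eqnarray*}
\sum_{i=0}^{\lceil \log \Delta \rceil} \OO\!\left(2^{2i} \log n\right) \;=\; \OO\!\left(\Delta^2 \log n\right)
\quad \text{and} \quad
\sum_{i=0}^{\lceil \log \Delta \rceil} \OO\!\left(2^{3i} \log(n/2^i)\right) \;=\; \OO\!\left(\Delta^3 \log(n/\Delta)\right)
\end{eqnarray*}
queries for the randomized and deterministic variants respectively. The main obstacle I anticipate is the failure-probability analysis in the randomized case: since the sampler is restarted up to $\OO(\log n)$ times with fresh randomness, I would inflate the per-invocation success probability to $1 - n^{-c}$ for a sufficiently large constant $c$ (affecting only the hidden constant in $\OO(\tilde\Delta^2 \log n)$) so that a union bound over all rounds preserves the desired high-probability guarantee.
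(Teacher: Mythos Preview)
Your proposal is correct and follows essentially the same doubling strategy as the paper's proof: guess $\tilde\Delta$, run the non-adaptive algorithm, check whether the resulting supergraph has maximum degree at most $\tilde\Delta$, and double otherwise, with the query count dominated by the final round and the randomized failure probability controlled by a union bound over the $\OO(\log \Delta)$ rounds. Your correctness argument for the stopping criterion (surviving non-edges must lie between two high-degree vertices, so a passing degree test forces $\supgraph_{\tilde\Delta}=G$) is in fact spelled out more carefully than the paper's one-line justification ``by the construction of the query sets,'' but the underlying idea is the same.
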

\begin{proof}
    We start by guessing that $D = 1$ is an upper bound for $\Delta$, and then run Algorithm~\ref{alg:max_deg_randomized} or the algorithm from Theorem~\ref{thm:max_deg_deterministic} as if the known value of $\Delta$ were $D + 1$. If $D$ is smaller than $\Delta$, then the graph outputted by that algorithm has a vertex whose degree exceeds $D$ by the construction of the query sets. In this case, we double the value of $D$ and run the algorithm again as if $\Delta$ were $D + 1$, repeating the process $\OO(\log \Delta)$ times.

    Each round of adaptivity performs $\OO\big(D^3 \log(n / D)\big)$ queries. Since $D$ is doubled each round, the total number of queries is dominated by the last round, resulting in the total number of queries being $\OO\big(\Delta^3 \log(n / \Delta)\big)$.

    For the randomized algorithm, we still need to show that the output is correct with high probability. Recall that  the probability of Algorithm~\ref{alg:max_deg_randomized} succeeds is at least $1 - 1/n$, and note that if the algorithm succeeds on all $\OO(\log \Delta)$ rounds of adaptivity, then the output is correct. The probability of this is at least $(1 - 1/n)^{\OO(\log n)} \approx (1/\mathrm{e})^{\OO(\log(n) / n)}$, which tends to $1$ as $n$ increases.
\end{proof}

Note that the type of adaptivity used by the above algorithms is rather weak: in a sense, we have a fixed (possibly randomized) sequence of queries where our guess for $\Delta$ varies, and the adaptivity is only used to decide when to stop performing more queries from the sequence.

Our algorithms parameterized by the maximum degree can be straightforwardly generalized for a recently introduced parameter \emph{maximum pairwise connectivity} $\lambda$, which is the maximum number of vertex-disjoint paths with at least one internal vertex between any pair of vertices \cite{Korhonen24}. Clearly, $\lambda$ is bounded by the maximum degree, since each of the disjoint paths has to use one of the edges adjacent to the starting vertex of the path. 
By Menger's theorem, any two non-adjacent vertices $u$ and $v$ can be separated by removing at most $\lambda$ vertices.
Thus, including the vertices to the queried sets with probability $1/(\lambda + 1)$ yields a tighter upper bound $\OO(\lambda^2 \log n)$ for the needed number of queries such that the algorithm succeeds with high probability. Similarly, we can consider $\lambda$-Query-Schemes for the derandomized version.

\subsection{Bounded Treewidth}

Next, we will provide algorithms for \gr in bounded treewidth graphs. In what follows, we assume that the hidden graph $G$ has treewidth $k$. We will use the following well-known fact. 

\begin{fact}[\cite{Rose74}] \label{fact:separators} 
    Let $G = (V, E)$ be a $k$-tree. Suppose $u, v\in V$ such that $uv\not\in E$. Then there exists a set $S\subseteq V \setminus \{u,v\}$ with $|S|= k$ such that $u$ and $v$ are in different connected components in $G[V\setminus S]$.
\end{fact}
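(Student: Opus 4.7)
The plan is to use the classical recursive characterization of $k$-trees (which is essentially Rose's description): every $k$-tree is either a $(k+1)$-clique or is obtained from a smaller $k$-tree $G'$ by adjoining a new vertex $w$ whose neighborhood $N_G(w)$ is a $k$-clique of $G'$. I would then induct on $|V(G)|$. The base case $|V(G)| = k+1$ is the $(k+1)$-clique, which has no non-adjacent pairs, so the statement holds vacuously.

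For the inductive step, write $G = G' + w$ with $N_G(w) = K$ a $k$-clique of $G'$, and fix a non-edge $uv \in \binom{V(G)}{2} \setminus E(G)$. The first case is $u = w$ (symmetrically $v = w$); then $v \in V(G') \setminus K$ since $vw$ is a non-edge, and we may take $S := K$. This set has size $k$, is disjoint from $\{u,v\}$, and makes $w$ isolated in $G[V \setminus S]$, so $u$ and $v$ are in different components. In the second case $u, v \in V(G')$, so $uv$ is already a non-edge of $G'$, and the inductive hypothesis supplies an $S \subseteq V(G') \setminus \{u,v\}$ with $|S| = k$ that separates $u$ from $v$ in $G'$.

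I would then argue that this same $S$ still separates $u$ from $v$ in $G$. Suppose for contradiction that $G[V \setminus S]$ contains a $u$-$v$ path $P$. If $P$ avoids $w$, then $P$ lies entirely in $G'[V \setminus S]$, contradicting the inductive hypothesis. Otherwise $P$ enters $w$ through some $k_1 \in K$ and leaves through some distinct $k_2 \in K$; since $K$ is a clique in $G'$, the edge $k_1 k_2$ is present in $G'[V \setminus S]$ (note $K \cap S = \emptyset$ along $P$), so replacing the $k_1 w k_2$ subpath by this edge yields a $u$-$v$ walk in $G'[V \setminus S]$, again contradicting the inductive hypothesis. Hence $S$ separates $u$ from $v$ in $G$, completing the induction.

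The main obstacle is simply justifying the recursive construction of $k$-trees rather than reproving it from scratch; this is Rose's classical characterization and so can be cited. A minor subtlety I would verify explicitly is that the set $S$ produced in each case is genuinely contained in $V \setminus \{u,v\}$: in the first case $w \notin K$ by construction and $v \notin K$ precisely because $wv$ is a non-edge, while in the second case $w \notin V(G') \supseteq S$.
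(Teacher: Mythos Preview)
The paper does not prove this statement at all: it is stated as a well-known fact with a citation to \cite{Rose74} and used as a black box. Your inductive argument via the recursive construction of $k$-trees is correct and is essentially the classical proof, so there is nothing to compare against beyond noting that your write-up supplies what the paper deliberately omits. One cosmetic point: the parenthetical ``$K \cap S = \emptyset$ along $P$'' is slightly imprecise (you only need $k_1, k_2 \notin S$, which follows from $P \subseteq V \setminus S$), and your final sentence about $w \notin V(G') \supseteq S$ is verifying that $w \notin S$ rather than that $S$ avoids $\{u,v\}$, but neither affects the validity of the argument.
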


We give an adaptive algorithm that uses 2 rounds of adaptivity when the treewidth $k$ of the hidden graph is known. First, the algorithm uses Fact~\ref{fact:separators} to construct a supergraph $\supgraph$ of the hidden graph $G$ such that the treewidth of $\supgraph$ is at most $k$. Then, the algorithm removes edges from $\supgraph$ to find $G$.

Algorithm~\ref{alg:tree_decomp} performs the first step: It outputs a supergraph of the hidden graph. Note that, to get a correct result, one does not actually need to know the treewidth of the hidden graph $G$ but an upper bound of it. However, the number of queries conducted increases if the bound is not tight. Next, we will prove the correctness and complexity of Algorithm~\ref{alg:tree_decomp}.

\begin{algorithm}
\caption{An algorithm for \gr for learning a supergraph of treewidth $k$.}\label{alg:tree_decomp}
\KwData{A vertex set $V$, a CC oracle $\CC(\cdot)$, upper bound for treewidth $k$}
\KwResult{A supergraph $\supgraph$ of the hidden graph $G$ with treewidth at most $k$ with high probability}
Initialize $\supgraph$ to complete graph on $V$\;
\For{$t = \OO(k^2 \log n)$ \textup{times}}{
    Sample $Q \subseteq V$ s.t. $v \in Q$ with probability $1/(k+1)$ for all $v \in V$ independently\;
    $\mathcal{C} \gets \CC(Q)$\;
    Remove edges $uv$ for which $v$ and $u$ are in different components from $\supgraph$\;
}
\Return $\supgraph$\;
\end{algorithm}

\begin{lemma}\label{lemma:tree_decomp}
    Algorithm~\ref{alg:tree_decomp} is a non-adaptive randomized algorithm that executes $\OO (k^2\log n)$ CC queries and returns a graph $\supgraph$ which is a supergraph of the hidden graph $G$ and, with high probability, has treewidth at most $k$.
\end{lemma}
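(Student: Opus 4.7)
The plan is to argue the two claims of the lemma separately: first the deterministic fact that the output is always a supergraph of $G$, and then the probabilistic statement that its treewidth is at most $k$ with high probability.

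First, I would observe that whenever $uv \in E(G)$ and both $u, v \in Q$, the oracle returns $u$ and $v$ in the same component of $G[Q]$ (since the edge $uv$ itself lies in $G[Q]$). Consequently the algorithm never deletes an edge of $G$, so at every step $E(G) \subseteq E(\supgraph)$, giving the supergraph property deterministically.

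For the treewidth bound, the key idea is to exhibit a $k$-tree that contains $\supgraph$ with high probability. Since $G$ has treewidth at most $k$, fix any $k$-tree $G^*$ with $E(G) \subseteq E(G^*)$; it suffices to show that every non-edge of $G^*$ is removed from $\supgraph$ w.h.p., which implies $E(\supgraph) \subseteq E(G^*)$ and hence $\tw(\supgraph) \le \tw(G^*) = k$. Consider any non-edge $uv$ of $G^*$. By Fact~\ref{fact:separators}, there exists $S \subseteq V \setminus \{u,v\}$ with $|S| = k$ such that $u$ and $v$ lie in distinct components of $G^*[V \setminus S]$. Since $G$ is a subgraph of $G^*$, they also lie in distinct components of $G[V \setminus S]$, and therefore in distinct components of $G[Q]$ for any $Q \subseteq V \setminus S$ containing both $u$ and $v$ (any $u$-$v$ path in an induced subgraph of $G[V \setminus S]$ would be one in $G[V \setminus S]$). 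Hence, whenever a sampled $Q$ satisfies $\{u,v\} \subseteq Q$ and $Q \cap S = \emptyset$, the algorithm deletes the edge $uv$ from $\supgraph$.

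It remains to bound the probability that this good event fails for some non-edge of $G^*$ over the course of the algorithm. Applying Lemma~\ref{lemma:selection} with $q = 2$, $p = k$, $a = b = 1$, a single sampled $Q$ hits this witness with probability at least $c/(k+1)^2$ for an absolute constant $c > 0$. Over $t$ independent rounds, the probability that no round hits the witness is at most $(1 - c/(k+1)^2)^t \le \exp(-ct/(k+1)^2)$. There are at most $\binom{n}{2} \le n^2$ non-edges in $G^*$, so choosing $t = \Theta(k^2 \log n)$ with a large enough constant and applying a union bound makes the total failure probability at most $1/n$. Thus w.h.p. all non-edges of $G^*$ are removed, yielding $\tw(\supgraph) \le k$. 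The algorithm is non-adaptive by construction, and performs exactly $t = \OO(k^2 \log n)$ CC queries.

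The only subtle point is the transfer of separation from $G^*$ to induced subgraphs of $G$: I would make explicit that adding edges can only merge components, so a separator in $G^*$ remains a separator in $G$, and then that an induced subgraph of a graph with no $u$-$v$ path still has no $u$-$v$ path. Everything else is the standard sampling-plus-union-bound calculation already templated by the proof of Theorem~\ref{thm:max_deg_randomized}.
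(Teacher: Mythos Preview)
Your proposal is correct and follows essentially the same approach as the paper: fix a $k$-tree supergraph of $G$, use Fact~\ref{fact:separators} to obtain a size-$k$ separator for each of its non-edges, apply Lemma~\ref{lemma:selection} to lower-bound the per-round success probability, and finish with a union bound over the $\OO(n^2)$ non-edges. If anything, your write-up is slightly more explicit than the paper's, since you spell out both the deterministic supergraph property and the transfer of separation from $G^*$ down to induced subgraphs of $G$.
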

\begin{proof}
    The proof is a straightforward adaptation from Konrad et al.\ \cite{Konrad24}.
    Let $\bar{G} = (V, \bar{E})$ be any $k$-tree which is a supergraph of $G$. Let $\bar{E}^- = V\times V \setminus \bar{E}$ be the set of non-edges in $\bar{G}$.
    Next, we will prove that Algorithm~\ref{alg:tree_decomp} detects all non-edges $\bar{E}^-$ of $\bar{G}$ with high probability.

    By Fact~\ref{fact:separators}, if $u$ and $v$ are not neighbors in $\bar{G}$ then there exists a separator $U$ of size $k$. Furthermore, as $\bar{G}$ is a supergraph of $G$, $U$ separates $u$ and $v$ in $G$. To detect the non-edge $uv$ it is sufficient to conduct query $\CC(Q)$ for some $Q$ subject to $u, v\in Q$ and $w\not\in Q$ for all $w\in U$.
    Again, by Lemma~\ref{lemma:selection}, the probability of this occurring is at least $1 / ((k + 1)^2 \cdot \mathrm{e})$.
    We conduct $t \coloneqq c \cdot (k + 1)^2 \cdot \ceil*{\ln n}$ queries. The probability that we fail to identify a non-edge is 
    \begin{eqnarray*}
        \Pr(\text{non-edge } uv \text{ not detected}) & = & \prod_{i = 1}^{t} \big(1 - \Pr(u, v\in Q \text{ and } w \not\in Q \text{ for all } w\in U)\big)\\
        & \leq & \left(1 - \frac{1}{(k + 1)^2 \cdot \mathrm{e}}\right)^{c \cdot (k + 1)^2 \cdot \ceil*{\ln n}} \\
        & \leq & \left( \mathrm{e}^{- \frac{1}{(k + 1)^2 \cdot \mathrm{e}}} \right)^{c \cdot (k + 1)^2 \cdot \ln n} \\
        & = & \mathrm{e}^{ - c \ln(n) / \mathrm{e}} \\
        & = & \left(\frac{1}{n}\right)^{c / \mathrm{e}}
    \end{eqnarray*}
    where the second inequality follows from the fact that $1 + x \leq \mathrm{e}^x$.
    If $c \geq 3\mathrm{e}$ then $\Pr(\text{non-edge } uv \text{ not detected}) \leq 1/n^3$.

    There are at most $n(n-1)/2$ potential non-edges. By union bound, the probability that at least one of them is not detected is at most $\OO(1/n)$.
\end{proof}

The graph found by Algorithm~\ref{alg:tree_decomp} may contain edges that are not present in the hidden graph. Next, we show how to prune a supergraph to find the hidden graph.

We take advantage of the following fact about existence of small balanced separators in bounded treewidth graphs (see, e.g., Lemma 7.19 of the textbook of Cygan et al.\ \cite{Cygan15}):
\begin{fact}[folklore]
    Let $G$ be a graph with treewidth $k$. Then there exists a $1/2$-balanced separator of order at most $k+1$.
\end{fact}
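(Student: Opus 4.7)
The plan is to exhibit the separator as a bag of an optimal tree decomposition of $G$. Fix a tree decomposition $(T, \{X_t\}_{t \in V(T)})$ of width $k$, so $|X_t| \leq k+1$ for every node $t$, and root $T$ arbitrarily. For each node $t$, let $V_t$ denote the union of all bags in the subtree of $T$ rooted at $t$; note that $V_{\text{root}} = V$ and that $X_t \subseteq V_t$.

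I would locate the desired bag by a descent along $T$: start at the root, and while the current node $t$ has a child $c$ with $|V_c \setminus X_t| > n/2$, move to such a child. The descent terminates at some node $t^\star$ because each step descends strictly in the finite tree. I claim $X_{t^\star}$ is a $1/2$-balanced separator of order at most $k+1$.

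The size bound $|X_{t^\star}| \leq k+1$ is immediate from the width. For the separation property, I will argue that every connected component of $G \setminus X_{t^\star}$ is contained in one of the sets $V \setminus V_{t^\star}$ and $V_c \setminus X_{t^\star}$ (ranging over children $c$ of $t^\star$). This is where the third axiom of a tree decomposition enters: for any vertex $v$, the set of bags containing $v$ forms a connected subtree of $T$. Hence, if $v \notin X_{t^\star}$, every bag containing $v$ lies in a single connected component of $T - t^\star$, which places $v$ in exactly one of the regions above. Since every edge of $G$ is covered by some common bag, no edge of $G \setminus X_{t^\star}$ can cross between two different regions, giving the claimed confinement of components.

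Finally, the balance bounds come from the descent itself: the stopping condition yields $|V_c \setminus X_{t^\star}| \leq n/2$ for every child $c$, while either $t^\star$ is the root (so $V \setminus V_{t^\star} = \emptyset$) or the descent moved from a parent $p$ to $t^\star$ because $|V_{t^\star} \setminus X_p| > n/2$, which forces $|V_{t^\star}| > n/2$ and hence $|V \setminus V_{t^\star}| < n/2$. The main subtlety is the regional-confinement step, which rests crucially on the subtree-connectivity axiom; once that is in hand, the size counting is routine.
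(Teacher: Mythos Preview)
The paper does not supply its own proof of this fact; it merely states it as folklore and points to Lemma~7.19 in the textbook of Cygan et al. Your argument is correct and is precisely the standard descent proof one finds in that reference: locate a bag $X_{t^\star}$ by walking down from the root toward the heavy subtree, then use the subtree-connectivity axiom to confine each component of $G - X_{t^\star}$ to a single branch of $T - t^\star$. One small point you could make explicit is that the regions $V_c \setminus X_{t^\star}$ (over the children $c$) together with $V \setminus V_{t^\star}$ are pairwise disjoint---this follows from the same connectivity axiom and is what guarantees that at most one child can trigger the descent condition---but the proof goes through as written.
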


A {\em coloring} of a graph $G$ is a mapping from the vertex set to some number of colors such that neighboring vertices have always different color. Graphs with small treewidth can be colored with few colors (see, e.g., Sopena \cite{Sopena97}):
\begin{fact}[folklore]
    Every graph with treewidth at most $k$ can be colored with $k+1$ colors.
\end{fact}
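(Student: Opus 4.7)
The plan is to prove the fact by induction on $|V(G)|$, using the tree decomposition to locate a vertex of degree at most $k$ and then greedily assigning it a color that avoids its neighbors. This is essentially the standard argument that treewidth-$k$ graphs are $k$-degenerate, combined with greedy coloring.

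First, I would take any tree decomposition $(T, \mathcal{X})$ of $G$ of width at most $k$, so every bag has at most $k+1$ vertices. Root $T$ at an arbitrary node and consider a leaf node $t$ with parent $t'$. Either $X_t \subseteq X_{t'}$, in which case $t$ can be deleted to obtain a strictly smaller tree decomposition of the same graph, or there is a vertex $v \in X_t \setminus X_{t'}$. In the latter case, the connectedness property of the tree decomposition forces the set of bags containing $v$ to be $\{X_t\}$, so by property~2 of a tree decomposition every neighbor of $v$ in $G$ must lie in $X_t \setminus \{v\}$. Hence $\deg_G(v) \le k$.

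Second, I would apply the inductive hypothesis to $G - v$. Removing $v$ from every bag yields a valid tree decomposition of $G - v$ of width at most $k$, so by induction $G - v$ admits a proper coloring with $k+1$ colors. Since $v$ has at most $k$ neighbors in $G$, at least one of the $k+1$ colors is unused on $N_G(v)$; assigning that color to $v$ extends the coloring to all of $G$.

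The main subtlety is handling the case $X_t \subseteq X_{t'}$, where no new vertex is isolated at the leaf; iteratively collapsing such redundant leaves is routine and must terminate since $V(G)$ is finite, after which some leaf contributes a private vertex $v$ as above (or the tree decomposition collapses to a single bag of size at most $k+1$, which is trivially $(k+1)$-colorable). The base case of a single vertex is immediate, completing the induction.
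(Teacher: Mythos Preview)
Your argument is correct and is precisely the standard proof: treewidth at most $k$ implies $k$-degeneracy (via the ``private'' vertex at a leaf bag), and $k$-degenerate graphs are greedily $(k+1)$-colorable. The paper itself does not supply a proof of this fact---it is stated as folklore with a reference---so there is nothing to compare against; your proposal fills in exactly the details one would expect.
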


The idea of the algorithm is to find small separators that split the graph into smaller connected components and then recursively to split the components into even smaller ones. At each stage, we find all the neighbors of the vertices belonging to the separator. Then, we continue to split the graph into smaller connected components. At each stage, we keep track of the vertices whose neighbors have been found and exclude them from the queries which allows us to conduct tests for several connected components in parallel.

\begin{algorithm}[ht]
    \caption{An algorithm for \gr given a supergraph of the hidden graph.}\label{alg:superstructure}
    \KwData{A vertex set $V$, a CC oracle $\CC(\cdot)$, a supergraph of the hidden graph $\supgraph$}
    \KwResult{The hidden graph $G$}
    $G^* \gets \supgraph$\;
    Find a tree decomposition $(X, T)$ of $\supgraph$\;
    Color vertices in $\supgraph$ arbitrarily using $k+1$ colors\;
    $R \gets V$ // Vertices that have not been processed yet\;
    $\mathcal{Q} \gets \emptyset$ // Queries that will be conducted\;
    \While{$R \neq \emptyset$}{
        $\mathcal{S} = \{\}$ // Collection of sets\;
        \For{C \textup{that is a connected component of $\supgraph [R]$}}{
            Find a balanced separator $X'$ of $C$\;
            Append $X'$ to $\mathcal{S}$\;
            }
        $h = \max_{X'\in \mathcal{S}} |X'|$\;
        Create a collection $K_1$ of $h(h-1)/2$ sets such that every set in $K_1$ contains at most two elements from $X'$ for all $X'\in \mathcal{S}$ and every pair $u, v\in X'$, $u\neq v$ appears together in at least one set in $K_1$\;
        $S' \gets \bigcup_{U\in \mathcal{S}} U$\;
        Create a collection $K_2$ of $h(k+1)$ sets such that each set in $K_2$ contains one element from each $X'\in \mathcal{S}$ and all vertices of one color in $R\setminus S'$ and every vertex in $S'$ appears at least once in a set with each color\;
        Add all sets in $K_1$ and $K_2$ to $\mathcal{Q}$\;
        $R \gets R \setminus S'$\;
    }
    \For{$Q\in \mathcal{Q}$}{
        $\mathcal{C} \gets \CC(Q)$\;
        \For{$u, v\in Q$}{
            \If{$u$ \textup{and} $v$ \textup{in different connected components in} $\mathcal{C}$}{Remove $uv$ from $G^*$}
        }
    }
    \Return $G^*$
\end{algorithm}

\begin{lemma} \label{lemma:superstructure}
    Suppose that we are given a graph $\supgraph$ whose treewidth is $k$ and which is a supergraph of the hidden graph $G$. Then Algorithm~\ref{alg:superstructure} is a deterministic, non-adaptive algorithm that finds the hidden graph $G$ with $\OO(k^2 \log n)$ CC queries.
\end{lemma}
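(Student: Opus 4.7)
The plan is to establish $|\mathcal{Q}| = O(k^2 \log n)$ and to prove that every non-edge of $G$ is removed from $G^*$.

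The size bound is straightforward: since $\supgraph$ has treewidth at most $k$, every component of $\supgraph[R]$ admits a $1/2$-balanced separator of size at most $k+1$, so the largest component size at least halves each round and the outer loop runs $O(\log n)$ times. Within a single round, $|K_1| = h(h-1)/2 = O(k^2)$ (using $h \le k+1$) and $|K_2| = h(k+1) = O(k^2)$, giving the claim. Determinism and non-adaptivity are immediate: the tree decomposition, the $(k+1)$-coloring, the balanced separators, and the collections $K_1, K_2$ are all produced from $\supgraph$ before any CC query is issued.

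For correctness, I would fix a non-edge $uv$ of $G$ and exhibit some $Q \in \mathcal{Q}$ that separates $u$ and $v$ in $G[Q]$. Let $i^*$ be the first round in which at least one of $u, v$ enters a separator; WLOG $u \in X'_u$, where $X'_u$ is the separator of the component $C_u$ of $\supgraph[R]$ containing $u$. Case A: $v$ also lies in some separator $X'_v$ of round $i^*$. The standard indexed realization of $K_1$---order each $X'_i \in \mathcal{S}$ as $x^i_1, \ldots, x^i_h$ and take $Q_{a,b} = \bigcup_{X'_i \in \mathcal{S}} \{x^i_a, x^i_b\}$ for each $a < b$---yields $h(h-1)/2$ valid sets among which at least one contains both $u$ and $v$. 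Case B: $v \in R \setminus S'$ at round $i^*$. The guarantee "every vertex of $S'$ appears in a set with each color" supplies a $K_2$ set in which $u$ represents $X'_u$ and the chosen color class is that of $v$, so both $u$ and $v$ lie in the query.

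The separation claim rests on two structural facts: (i) distinct components of $\supgraph[R]$ have no edges between them in $\supgraph$ restricted to $R$ (and hence none in $G[R]$ since $G \subseteq \supgraph$); and (ii) each color class of the $(k+1)$-coloring of $\supgraph$ is an independent set in $G$. In Case A the query contains no vertex of $R \setminus S'$, and from $X'_u$ it contains at most two vertices (exactly $\{u,v\}$ when $X'_u = X'_v$), so (i) confines $v$'s $G[Q]$-neighborhood to $Q \cap C_v$: this is disjoint from $u$ when $C_u \ne C_v$, and equals $\{u\}$ when $C_u = C_v$, forcing $v$ to be isolated by the non-edge $uv$. In Case B, (ii) eliminates every $G$-edge among the $R \setminus S'$ part of $Q$, leaving the only candidate $G[Q]$-neighbor of $v$ to be the representative of its own component $C_v$; that representative is $u$ when $C_v = C_u$ (giving isolation), and otherwise lies entirely inside $C_v$, so $v$'s whole $G[Q]$-component is disjoint from $u$. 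The main obstacle is precisely this structural separation argument: verifying that the covering guarantees of $K_1, K_2$ mesh correctly with the component and coloring structure so that no $G[Q]$-path from $u$ to $v$ survives; the rest of the proof is routine bookkeeping.
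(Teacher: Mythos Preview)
Your proposal is correct and mirrors the paper's argument: balanced separators bound the outer loop by $O(\log n)$ rounds, and the $K_1,K_2$ families together with the component and coloring structure of $\supgraph$ witness every non-edge. Your case split (both endpoints in $S'$ versus one in $R\setminus S'$) is a mild reorganization of the paper's split (same separator versus not), and your explicit indexed realization of $K_1$ actually handles the cross-separator subcase more transparently than the paper's sketch; the only boundary case to watch is $h=1$, where $K_1$ is empty but every $K_2$ set already contains both singleton-separator vertices.
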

\begin{proof}
    We show that, by conducting all queries in $\mathcal{Q}$ after an iteration of the \verb|while| loop, we find all the neighbors of the vertices in $V\setminus R$. The algorithm terminates when $R=\emptyset$. Thus, the algorithm constructs the hidden graph $G$.

    At each iteration of the \verb|while| loop, we find all the neighbors to vertices in $S'$. To find out whether $u$ is a neighbor of $v$, it suffices to conduct a query $\CC (Q)$ such that $u, v\in Q$ and none of the other neighbors of $v$ are in $Q$.

    Let us consider an arbitrary vertex $v\in S'$ which belongs to a balanced separator $X'$ in a connected component $C$ in $\supgraph [R]$. First, suppose $u\in X'$. We want to conduct a query with set $Q$ such that $u,v \in Q$ and none of the other neighbors of $v$ are in $Q$. The set $K_1$ contains such a set $Q$ as the collection has at least one set with that contains both $u$ and $v$ and all the other vertices in that set are in different connected components in $\supgraph [R]$.

    Next, suppose that $u\not\in X'$. Now the collection $K_2$ contains a desired set $Q$. The vertices in outside $S'$ included in the set $Q$ are of the same color. Thus, they do not have neighbors that is not in $S'$ is included in the set $Q$. Thus, if there is no edge between $u$ and $v$ in $G$, then $u$ and $v$ have to be in different connected components in $\CC(Q)$.
    
    At each iteration of the \verb|while| loop, each of the connected components have size at most half of the size of the component before removing the balanced separator. Therefore, the procedure terminates after $\OO(\log n)$ iterations. We conduct at most $\OO(k^2)$ queries on each round. Thus, in total, we conduct at most $\OO(k^2 \log n)$ queries. 
\end{proof}

To solve \gr for graphs with bounded treewidth, we simply run Algorithms~\ref{alg:tree_decomp} and \ref{alg:superstructure} resulting in Algorithm~\ref{alg:gr_tw}.
\begin{algorithm}
    \caption{An algorithm for \gr for graphs with bounded treewidth}\label{alg:gr_tw}
    \KwData{A vertex set $V$, a CC oracle $\CC(\cdot)$, treewidth $k$}
    \KwResult{The hidden graph $G$ with high probability}

    $\supgraph \gets \text{ Algorithm~\ref{alg:tree_decomp}} (V, \CC (\cdot ), k)$\;
    $G \gets  \text{ Algorithm~\ref{alg:superstructure}} (V, \CC (\cdot ), \supgraph)$\;
    \Return $G$
\end{algorithm}

\begin{theorem} \label{thm:tw_main}
    Algorithm~\ref{alg:gr_tw} is an adaptive randomized algorithm that solves \gr with $\OO (k^2 \log n)$ CC queries when $k$ is known. The output of the algorithm is correct with high probability.
\end{theorem}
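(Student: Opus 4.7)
The plan is to combine the two lemmas that describe the phases of Algorithm~\ref{alg:gr_tw}. The algorithm is a straightforward pipeline: Algorithm~\ref{alg:tree_decomp} produces a supergraph $\supgraph$ of $G$ whose treewidth is at most $k$ with high probability, and Algorithm~\ref{alg:superstructure} then deterministically prunes $\supgraph$ down to $G$ using the structural properties of bounded-treewidth graphs. So the proof is essentially a bookkeeping argument built on Lemmas~\ref{lemma:tree_decomp} and~\ref{lemma:superstructure}.

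First I would handle the query complexity. By Lemma~\ref{lemma:tree_decomp}, the first call uses $\OO(k^2 \log n)$ CC queries. By Lemma~\ref{lemma:superstructure}, given any supergraph with treewidth $k$, the second call uses $\OO(k^2 \log n)$ queries. Summing gives $\OO(k^2 \log n)$ queries in total. Adaptivity is two rounds, since the queries performed by Algorithm~\ref{alg:superstructure} depend on $\supgraph$, which in turn is determined by the outcomes of the non-adaptive batch of queries performed in Algorithm~\ref{alg:tree_decomp}.

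Next I would handle correctness. Condition on the event $\mathcal{E}$ that Algorithm~\ref{alg:tree_decomp} succeeds, i.e.\ returns a graph $\supgraph$ that is both a supergraph of $G$ and has treewidth at most $k$. By Lemma~\ref{lemma:tree_decomp}, $\Pr(\mathcal{E}) \ge 1 - \OO(1/n)$. Under $\mathcal{E}$, the input to Algorithm~\ref{alg:superstructure} satisfies its hypotheses, so by Lemma~\ref{lemma:superstructure} the second phase deterministically outputs $G$. Consequently the overall algorithm outputs $G$ with probability at least $1 - \OO(1/n)$, which tends to $1$ as $n\to\infty$, i.e.\ with high probability.

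There is essentially no hard step here, since both lemmas have already been proved; the only minor subtlety worth a sentence in the writeup is that Algorithm~\ref{alg:superstructure} is stated for an arbitrary supergraph of treewidth $k$ and hence can be safely invoked on whatever $\supgraph$ Algorithm~\ref{alg:tree_decomp} returns under $\mathcal{E}$, without any additional randomness and without degrading the success probability.
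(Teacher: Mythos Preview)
Your proposal is correct and follows essentially the same approach as the paper: invoke Lemma~\ref{lemma:tree_decomp} for the first phase and Lemma~\ref{lemma:superstructure} for the second, then combine the query counts and condition on the high-probability success of the first phase. Your write-up is slightly more explicit about the two rounds of adaptivity and the conditioning argument, but the underlying reasoning is identical.
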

\begin{proof}
    By Lemma~\ref{lemma:tree_decomp}, Algorithm~\ref{alg:tree_decomp} returns with high probability a supergraph of $G$ with treewidth at most $k$ using $\OO (k^2\log n)$ queries. By Lemma~\ref{lemma:superstructure}, Algorithm~\ref{alg:superstructure} always returns the hidden graph $G$ using $\OO (k^2 \log n)$ queries, assuming that $G^*$ has treewidth at most $k$. Combining these two results, we get that Algorithm~\ref{alg:gr_tw} solves \gr with high probability and uses $\OO (k^2 \log n)$ queries.
\end{proof}

\subsubsection{Derandomization}

We can derandomize Algorithm~\ref{alg:gr_tw} similarly to how Algorithm~\ref{alg:max_deg_randomized} is derandomized. That is, we replace the randomized subsets of Algorithm~\ref{alg:tree_decomp} by a $k$-Query-Scheme,
resulting in a deterministic construction of a supergraph with treewidth at most $k$.

\begin{theorem} \label{thm:tw_derandom}
    There exists an adaptive deterministic algorithm that solves \gr with $\OO(k^3 \log n)$ CC queries when $k$ is known.
\end{theorem}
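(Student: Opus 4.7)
The plan is to follow the same derandomization template used for Theorem~\ref{thm:max_deg_deterministic}: replace the random sampling in Algorithm~\ref{alg:tree_decomp} by a deterministic $k$-Query-Scheme, and then feed the resulting supergraph into the (already deterministic) Algorithm~\ref{alg:superstructure}. This reduces the proof to verifying that the witness structure needed for the first phase is exactly the one captured by Definition~\ref{def:query_scheme} with $p=k$.

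Concretely, I would invoke Lemma~\ref{lemma:k-query} to fix a $k$-Query-Scheme $\mathcal{Q}$ of size $O\!\left(k^3 \log(n/k)\right)$. In the derandomized version of Algorithm~\ref{alg:tree_decomp} we simply query every $Q_i \in \mathcal{Q}$ and, for each such query, remove from $\supgraph$ every edge $uv$ with $u,v \in Q_i$ that the oracle places in different components of $\CC(Q_i)$. For correctness, fix any $k$-tree $\bar{G}$ that is a supergraph of $G$ (one exists because $\tw(G) \le k$). For any non-edge $uv$ of $\bar{G}$, Fact~\ref{fact:separators} supplies a separator $S \subseteq V \setminus \{u,v\}$ of size $k$ that separates $u$ from $v$ in $\bar{G}$, and therefore also in the subgraph $G$. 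By Definition~\ref{def:query_scheme}, the witness $(\{u,v\}, S)$ is handled by some $Q_i \in \mathcal{Q}$ with $u,v \in Q_i$ and $S \cap Q_i = \emptyset$; on this query the oracle returns $u$ and $v$ in distinct components, so $uv$ is removed from $\supgraph$. Consequently, the output of the first phase is a supergraph of $G$ whose edges form a subset of the edges of $\bar{G}$, so its treewidth is at most $k$.

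Having produced such a supergraph deterministically, we then apply Algorithm~\ref{alg:superstructure}. By Lemma~\ref{lemma:superstructure} it is deterministic, requires only that its input be a treewidth-$k$ supergraph of $G$, and uses $O(k^2 \log n)$ CC queries. Summing the two phases yields
\[
O\!\left(k^3 \log(n/k)\right) + O(k^2 \log n) \;=\; O(k^3 \log n)
\]
CC queries in total. Only two rounds of adaptivity are used, since the queries of the second phase depend on the supergraph produced by the first, but within each phase the queries are fixed in advance.

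The main conceptual point — and the only place one could slip — is checking that the combinatorial object needed for the supergraph phase of the treewidth algorithm coincides precisely with Definition~\ref{def:witness}: a pair $\{u,v\}$ together with an excluded set of size at most $k$. Fact~\ref{fact:separators} furnishes exactly such excluded sets for every non-edge of a $k$-tree, so the existence of the required scheme comes directly from Lemma~\ref{lemma:k-query} without any fresh probabilistic argument. Everything else in the proof is bookkeeping, and the $O(k^3)$ factor is inherited from the size of the $k$-Query-Scheme rather than from the (cheaper) second phase.
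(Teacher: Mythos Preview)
Your proof is correct and follows essentially the same approach as the paper: replace the randomized queries in Algorithm~\ref{alg:tree_decomp} by a $k$-Query-Scheme from Lemma~\ref{lemma:k-query}, then run Algorithm~\ref{alg:superstructure} via Lemma~\ref{lemma:superstructure}, and sum the query counts. You spell out the correctness argument for the first phase (via Fact~\ref{fact:separators} and the witness definition) in more detail than the paper does, but the structure and the key ingredients are identical.
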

\begin{proof}
    By Lemma~\ref{lemma:k-query}, the $k$-Query-Scheme consists of $\OO (k^3 \log n/k)$ sets that are used as CC queries. Furthermore, by Lemma~\ref{lemma:superstructure}, Algorithm~\ref{alg:superstructure} conducts $\OO (k^2 \log n)$ CC queries. By combining these results, we conclude that the algorithm conducts $\OO (k^3 \log n)$ CC queries.
\end{proof}

\subsubsection{Unknown $k$}

Algorithm~\ref{alg:gr_tw} assumes that the treewidth $k$ of the hidden graph is known. Next, we show how to adapt it to work when $k$ is not known. The proof follows closely the proof of  Theorem~\ref{thm:max_deg_unknown}.
\begin{theorem} \label{thm:tw_unknown}
    There exists an adaptive randomized algorithm and an adaptive deterministic algorithm that solve \gr with $\OO (k^2 \log n)$ and $\OO (k^3\log n)$ CC queries, respectively. The algorithms use $\OO (\log k)$ rounds of adaptivity. The output of the randomized algorithm is correct with high probability. 
\end{theorem}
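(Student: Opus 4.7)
The plan is to adapt the guess-and-double strategy from Theorem~\ref{thm:max_deg_unknown}. I maintain a guess $K$ for an upper bound on the treewidth, initialized to $K = 1$. In each outer round, I run Algorithm~\ref{alg:gr_tw} (for the randomized bound) or the deterministic algorithm of Theorem~\ref{thm:tw_derandom} (for the deterministic bound) with parameter $K$, obtaining a candidate graph $G^*$; then I test whether $\tw(G^*) \le K$. If so, I return $G^*$; otherwise I double $K$ and iterate.

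The soundness of the stopping test rests on the observation that $G^*$ is always a supergraph of the hidden graph $G$, regardless of $K$. Indeed, both Algorithm~\ref{alg:tree_decomp} and Algorithm~\ref{alg:superstructure} delete an edge $uv$ only when some $\CC(Q)$ query returns $u$ and $v$ in different connected components of $G[Q]$, which cannot happen when $uv \in E(G)$. Since treewidth does not decrease under edge additions, $\tw(G^*) \ge \tw(G) = k$, so $\tw(G^*) \le K$ forces $K \ge k$. Conversely, if $K \ge k$, then by Theorem~\ref{thm:tw_main} (respectively Theorem~\ref{thm:tw_derandom}) the inner algorithm returns $G^* = G$ with high probability, giving $\tw(G^*) = k \le K$. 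This check plays the role of the maximum-degree test in the proof of Theorem~\ref{thm:max_deg_unknown}.

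For the complexity, each inner call uses $\OO(K^2 \log n)$ or $\OO(K^3 \log n)$ CC queries in $\OO(1)$ rounds of adaptivity. Summing the geometric progression over $K = 1, 2, 4, \dots, \Theta(k)$, the total query count is dominated by the last round, yielding the claimed bounds $\OO(k^2 \log n)$ and $\OO(k^3 \log n)$ within $\OO(\log k)$ outer rounds of adaptivity. High-probability correctness of the randomized variant follows from a union bound over the $\OO(\log k) = \OO(\log n)$ rounds, each failing with probability $\OO(1/n)$, mirroring the argument of Theorem~\ref{thm:max_deg_unknown}.

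The main obstacle I anticipate is the treewidth test itself, which is NP-hard to solve exactly. However, since the paper measures only query complexity, I can afford an FPT-in-$K$ exact treewidth algorithm, which runs in $\OO(f(K) \cdot n)$ time for some computable function $f$; with $K$ doubling only up to $\OO(k)$, this overhead is tolerable. Any constant-factor polynomial-time approximation for treewidth would also suffice, at the cost of a constant number of extra doublings that does not change the asymptotic bounds on either query complexity or number of rounds of adaptivity.
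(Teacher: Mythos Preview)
Your guess-and-double strategy and your soundness argument (the output is always a supergraph of $G$, hence $\tw(G^*)\ge k$, so $\tw(G^*)\le K$ forces $K\ge k$) are exactly what the paper uses. There is, however, a gap in your query-complexity accounting. You run the \emph{full} Algorithm~\ref{alg:gr_tw} with parameter $K$ and only then test treewidth. But Algorithm~\ref{alg:gr_tw} first calls Algorithm~\ref{alg:tree_decomp} and then Algorithm~\ref{alg:superstructure} on the resulting supergraph $\supgraph$. When $K<k$, that supergraph has $\tw(\supgraph)\ge k>K$ and may in fact have treewidth $\Theta(n)$ (take $K=1$ while $k$ is large: the $\OO(\log n)$ random queries remove very few non-edges). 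Algorithm~\ref{alg:superstructure} chooses its separators and its $(k{+}1)$-coloring from the actual treewidth of its input, so its cost is $\OO(\tw(\supgraph)^2\log n)$, not $\OO(K^2\log n)$. Your geometric-sum bound over the outer rounds therefore does not go through as stated.

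The fix is immediate and is what the paper does: perform the treewidth test on the output of Algorithm~\ref{alg:tree_decomp} \emph{before} invoking Algorithm~\ref{alg:superstructure}. Since $\supgraph$ is already a supergraph of $G$, your same monotonicity argument gives soundness of the test at that point, and each doubling round now genuinely costs $\OO(K^2\log n)$ (respectively $\OO(K^3\log n)$ when the $K$-Query-Scheme replaces the random sets). Once the test passes you have both $K\ge k$ and $\tw(\supgraph)\le K$, so a single call to Algorithm~\ref{alg:superstructure} then costs $\OO(K^2\log n)$ as required. Your remarks about using an FPT exact (or constant-factor approximate) treewidth algorithm for the test are apt and go beyond what the paper spells out; they do not affect the query count.
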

\begin{proof}
    We start by guessing that $D=1$ us an upper bound for $k$, and then run Algorithm~\ref{alg:tree_decomp} or the algorithm from Theorem~\ref{thm:tw_derandom} as if the known value of $k$ was $D$. If $D$ is smaller than $k$, then the graph outputted has treewidth larger than $D$. In this case, we double $D$ and run the algorithm again as $k$ was $D$, repeating the process $\OO (\log k)$ times.

    Each round of adaptivity performs $\OO (D^2 \log n)$ or $\OO (D^3 \log n)$ queries. Since $D$ is doubled each round, the total number of queries is dominated by the last round, resulting in the total number of queries being $\OO (k^2 \log n)$ or $\OO (k^3 \log n)$. 

    For the randomized algorithm, it remains to show that the output is correct with high probability. We notice that Algorithm~\ref{alg:tree_decomp} returns a supergraph of the hidden graph. Thus, if $D$ is smaller than $k$, then we double $D$. It follows that at the final round $D$ is larger than or equal to $k$. Now, by Lemma~\ref{lemma:tree_decomp}, Algorithm~\ref{alg:tree_decomp} returns, with high probability, a graph whose treewidth is at most $2k$. 
\end{proof}

Combining the adaptive algorithms for $m$, $\Delta$, and $k$ yields the following result:
\begin{theorem}\label{thm:combination}
    There exists an adaptive randomized algorithm that solves \gr with $\OO(\min\{m/\log m, \Delta^2, k^2\} \cdot \log n)$ CC queries. The output of the algorithm is correct with high probability. 
\end{theorem}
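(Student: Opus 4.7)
The idea is to run the three adaptive algorithms from Corollary~\ref{thm:edge}, Theorem~\ref{thm:max_deg_unknown}, and Theorem~\ref{thm:tw_unknown} in parallel, interleaving their queries in a round-robin fashion, and stopping as soon as any one of them terminates with a purported reconstruction of $G$. Concretely, maintain the internal states of all three algorithms; at step $t$, issue the next query demanded by algorithm $(t \bmod 3) + 1$, feed the oracle's answer back into that algorithm's state, and check whether that algorithm has now declared itself finished. Output the answer produced by the first algorithm to halt.

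\textbf{Query complexity.} Let $T_m = \OO(m\log n / \log m)$, $T_\Delta = \OO(\Delta^2 \log n)$, and $T_k = \OO(k^2 \log n)$ be the (worst-case) query budgets of the three algorithms as stated in the respective results; none of these bounds depends on the others. In the interleaved execution, each algorithm has issued at most $\lceil t/3 \rceil$ queries after $t$ global steps, so the first algorithm to reach its own termination condition does so after at most $3 \cdot \min\{T_m, T_\Delta, T_k\} + \OO(1)$ global steps. This gives the desired $\OO(\min\{m/\log m,\Delta^2,k^2\}\cdot \log n)$ CC queries.

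\textbf{Correctness and main obstacle.} Each of the three constituent algorithms, when it halts, is either deterministically correct (Corollary~\ref{thm:edge}) or correct with high probability (Theorems~\ref{thm:max_deg_unknown} and \ref{thm:tw_unknown}), with individual failure probability $\OO(1/n)$. A union bound over the three gives that the halting algorithm's output equals $G$ with probability $1 - \OO(1/n)$, which is high probability. The one point that requires care is to verify that each algorithm can self-detect termination without spending extra queries that would inflate the budget: for the $\Delta$-based and $k$-based algorithms this is exactly the doubling scheme used in Theorems~\ref{thm:max_deg_unknown} and~\ref{thm:tw_unknown}, which terminates as soon as its current guess $D$ is a valid upper bound (detected by the output graph having the claimed maximum degree, respectively treewidth, at most $D$); for the edge-based algorithm of Corollary~\ref{thm:edge}, termination is internal to the procedure. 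Hence no extra queries are incurred and the interleaving bound is tight up to constants, yielding the theorem.
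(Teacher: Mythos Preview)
Your proposal is correct and matches the paper's own proof essentially line for line: run the three algorithms from Corollary~\ref{thm:edge}, Theorem~\ref{thm:max_deg_unknown}, and Theorem~\ref{thm:tw_unknown} in parallel, stop when the first one halts, bound the query count by $3\cdot\min\{t_1,t_2,t_3\}$, and conclude correctness by a union bound. Your extra remark that each algorithm can self-detect termination without additional queries is a nice point of care that the paper leaves implicit.
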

\begin{proof}
    We use the adaptive randomized algorithms from Corollary~\ref{thm:edge}, Theorem~\ref{thm:max_deg_unknown}, and Theorem~\ref{thm:tw_unknown}. If ran separately, they would stop after performing some number of queries---say, $t_1$, $t_2$, and $t_3$, respectively. Consider instead running the algorithms in parallel, such that in the $i$th iteration we perform the $i$th query of each of the three algorithms. If any of them would stop after performing that query, we output the graph computed by that algorithm. In total, we thus perform $3 \cdot \min\{t_1, t_2, t_3\} = \OO(\min\{m / \log m, \Delta^2, k^2\} \cdot \log n)$ CC queries. Further, since the output of each algorithm is correct with high probability, so is the output of this algorithm.
\end{proof}

\subsection{Bounded Degeneracy}\label{sec:degeneracy}

Next, we prove a polylogarithmic upper bound for the query complexity for graphs of bounded degeneracy. 
The bound holds also for graphs with bounded number of edges, maximum degree, and treewidth, since such graphs also have bounded degeneracy. For planar graphs, the degeneracy~$d$ is at most five \cite{Lick70} but the number of edges, the maximum degree and the treewidth can be unbounded in terms of $d$, so the polylogarithmic upper bound holds for a strictly larger family of graphs than with the previous algorithms. However, this comes at the expense of an additional $\OO(\log n)$ factor in the query complexity.

Note that if the degeneracy is $d$, then the average degree of every induced subgraph is at most $2d$. Consequently, if the average degree of a graph is at most $2d$, then at least half of its vertices have degree at most $4d$. Our approach is then to repeatedly remove all vertices of degree at most $4d$ from $G$, at least halving the number of vertices in each iteration.

\begin{theorem}\label{thm:degeneracy}
    Algorithm~\ref{alg:degeneracy} is an adaptive randomized algorithm that solves \gr with $\OO(d^2 \log^2 n)$ CC queries when $d$ is known. The output of the algorithm is correct with high probability.
\end{theorem}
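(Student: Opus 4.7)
The plan is to prove this via an iterative peeling approach that exploits the degeneracy bound. Start with $V_1 = V$ and $G^+$ set to the complete graph on $V$; at iteration $i$, run (a restricted variant of) Algorithm~\ref{alg:max_deg_randomized} on $V_i$ using sampling probability $1/(4d+1)$ and $t = \Theta(d^2 \log n)$ query sets $Q \subseteq V_i$, each obtained by including every vertex of $V_i$ independently with that probability. After each query $\CC(Q)$, remove from $G^+$ every edge $uv$ with $u, v \in Q$ whose endpoints land in different components. Then define $V_{i+1}$ as the set of vertices of $V_i$ whose degree in $G^+[V_i]$ (after this pruning) exceeds $4d$, i.e., remove the ``apparent low-degree'' vertices. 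Stop when $V_i = \emptyset$.

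For correctness, the key claim is that, with high probability, iteration $i$ detects every non-edge $uv$ of $G$ with $u, v \in V_i$ and $\min(\deg_{G[V_i]}(u), \deg_{G[V_i]}(v)) \le 4d$. This follows from Lemma~\ref{lemma:selection} and the same computation as in Theorem~\ref{thm:max_deg_randomized}: for such a non-edge, the probability that a single query places $u$ alone with $v$ and excludes all of $u$'s (at most $4d$) neighbors in $V_i$ is $\Omega(1/d^2)$, so $\Theta(d^2 \log n)$ queries suffice to drive the per-non-edge failure probability below $1/n^3$, and a union bound over at most $\binom{n}{2}$ non-edges gives failure probability $O(1/n)$ per iteration. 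Conditioned on this event, the computed $G^+[V_i]$ matches $G[V_i]$ on all edges incident to vertices of $V_i$ whose $G[V_i]$-degree is at most $4d$, so the ``low-degree'' set we remove at step $i$ is correctly identified and all its incident edges (to vertices in $V_i$) are already known; incident edges to vertices outside $V_i$ were determined in earlier iterations when those endpoints were removed. Thus at termination $G^+ = G$.

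For the iteration count, the subgraph $G[V_i]$ has degeneracy at most $d$, hence average degree at most $2d$, so by a Markov-style argument at least half of its vertices have degree $\le 4d$; therefore $|V_{i+1}| \le |V_i|/2$ (with high probability), giving at most $\lceil \log_2 n \rceil$ iterations. Each iteration uses $\Theta(d^2 \log n)$ CC queries, producing the stated $O(d^2 \log^2 n)$ total. A final union bound over the $O(\log n)$ iterations shows the overall failure probability is $O(\log n / n)$, which is $o(1)$.

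The main subtlety—and the step deserving the most care—is the bookkeeping that ensures ``edges incident to a vertex removed in round $i$ are fully determined by the end of round $i$.'' This requires observing that edges between $V_{i+1}$-vertices and earlier-removed vertices were handled in prior iterations, so one must verify that when we declare vertex $v \in V_i$ to have ``small enough'' degree, the count of $4d$ refers to its degree in $G[V_i]$ rather than $G$; the degeneracy bound on $G$ transfers to every induced subgraph, which is exactly what makes the recursion work. The rest of the argument is a direct reuse of the analysis already carried out for Algorithm~\ref{alg:max_deg_randomized}.
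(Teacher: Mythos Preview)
Your proposal is correct and follows essentially the same approach as the paper's own proof: both iterate Algorithm~\ref{alg:max_deg_randomized} on the residual vertex set with sampling probability tuned to $4d$, peel off the vertices whose apparent degree is at most $4d$, use the degeneracy bound to guarantee that at least half the vertices are peeled each round, and union-bound over the $\OO(\log n)$ rounds. Your write-up is in fact slightly more careful than the paper's about the bookkeeping (that $G^+$ is always a supergraph of $G$, so the peeled set is exactly the true low-degree set under the good event, and that edges to previously removed vertices were already settled), but this is elaboration rather than a different argument.
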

\begin{proof}
    Analogously to the proof of Theorem~\ref{thm:max_deg_randomized}, one can show that after running Algorithm~\ref{alg:max_deg_randomized} with a possibly incorrect value $4d$ for the maximum degree, the outputted graph $G'$ has an edge $uv$ if and only if the hidden graph has that edge with high probability for all vertices $v$ with degree at most $4d$ in $G$. In other words, with high probability, we correctly identify all neighbors of vertices with degree at most $4d$ in $G$. Denote this set of vertices by $L$. It then remains to identify the structure of the subgraph of the hidden graph induced by $V \setminus L$. This is precisely what Algorithm~\ref{alg:degeneracy} does, repeating the process until we have identified the neighbors for all vertices. Since at least half of the vertices of each induced subgraph have degree at most $4d$, this process gets repeated at most $1 + \log_2 n$ times. Taking into account the query complexity of Algorithm~\ref{alg:max_deg_randomized}, we perform $\OO(d^2 \log^2 n)$ CC queries in total. Finally, recall from the proof of Theorem~\ref{thm:max_deg_randomized} that Algorithm~\ref{alg:max_deg_randomized} works correctly with probability at least $1 - 1/n$. Since $(1 - 1/n)^{1 + \log_2 n}$ tends to $1$ as $n$ increases, Algorithm~\ref{alg:degeneracy} works correctly with high probability.
\end{proof}

\begin{algorithm}[ht]
    \caption{An algorithm for \gr for graphs with bounded degeneracy}\label{alg:degeneracy}
    \KwData{A vertex set $V$, a CC oracle $\CC(\cdot)$, degeneracy $d$}
    \KwResult{The hidden graph $G$ with high probability}
    Initialize $\supgraph$ to a complete graph on $V$\;
    $V' \gets V$\;
    \While{$V' \neq \emptyset$}{
        Let $G'$ be the output of Algorithm~\ref{alg:max_deg_randomized} on the vertex set $V'$ with (incorrect) maximum degree $4d$\;
        Remove edges $uv$ from $\supgraph$ for which edge $uv$ is not in $G'$ and $u, v \in V'$\;
        Let $L$ be the set of vertices in $G'$ with degree at most $4d$\;
        $V' \gets V' \setminus L$\;
    }
    \Return $\supgraph$\;
\end{algorithm}

\subsubsection{Derandomization}

If we instead apply the non-adaptive deterministic algorithm from Theorem~\ref{thm:max_deg_deterministic} instead of the randomized algorithm for bounded degree graphs, we get an adaptive deterministic algorithm for graphs with bounded degeneracy.
\begin{theorem}\label{thm:degeneracy_deterministic}
    There exists an adaptive deterministic algorithm that solves \gr with $\OO\big(d^3 \log(n) \log(n / d)\big)$ CC queries when $d$ is known.
\end{theorem}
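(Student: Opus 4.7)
The plan is to derandomize Algorithm~\ref{alg:degeneracy} by substituting, for each internal call to Algorithm~\ref{alg:max_deg_randomized}, the deterministic non-adaptive algorithm of Theorem~\ref{thm:max_deg_deterministic}. The outer peeling structure that removes at least half of the remaining vertices per iteration is retained, so the resulting procedure stays adaptive with $\OO(\log n)$ rounds but becomes fully deterministic.

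First I would verify correctness of a single peeling iteration. Let $V'$ denote the current vertex set and suppose we run the algorithm of Theorem~\ref{thm:max_deg_deterministic} on $V'$ with claimed maximum degree $4d$, i.e.\ using the queries of a $4d$-Query-Scheme. For any vertex $v \in V'$ whose degree in $G[V']$ is at most $4d$ and any non-neighbor $u \in V'$ of $v$, the tuple $(\{u,v\},\, N_{G[V']}(v))$ is a valid witness with at most $4d$ forbidden vertices, so by Definition~\ref{def:query_scheme} some query in the scheme contains $u$ and $v$ while avoiding $N_{G[V']}(v)$, exposing the non-edge $uv$. Hence after one iteration the neighborhoods (within $V'$) of every vertex of degree $\le 4d$ in $G[V']$ are recovered exactly; those vertices are then safely removed from $V'$.

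Next I would argue termination and count queries. Since every induced subgraph of $G$ has degeneracy at most $d$, its average degree is at most $2d$, so by a counting argument at least half of the vertices of $V'$ have degree at most $4d$ in $G[V']$. Thus $|V'|$ at least halves in every iteration and the loop terminates after $1 + \log_2 n$ rounds. Each round on a vertex set of size $n_i \le n$ costs $\OO(d^3 \log(n_i/d)) = \OO(d^3 \log(n/d))$ CC queries by Theorem~\ref{thm:max_deg_deterministic}, and summing over the $\OO(\log n)$ rounds yields the claimed $\OO(d^3 \log(n)\log(n/d))$ total.

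I do not expect a serious obstacle: the only subtle point is that the $4d$-Query-Scheme guarantees correct non-edge detection only for vertices whose \emph{actual} induced degree is at most $4d$, whereas the subgraph $G[V']$ may still contain vertices of much larger degree. This is precisely why the algorithm trusts the reconstructed neighborhoods only for the low-degree vertices of $G[V']$ and peels them off before repeating.
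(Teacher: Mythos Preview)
Your proposal is correct and follows essentially the same approach as the paper: replace the randomized inner calls of Algorithm~\ref{alg:degeneracy} by the deterministic $4d$-Query-Scheme from Theorem~\ref{thm:max_deg_deterministic}, keep the peeling loop, and sum the $\OO(d^3\log(n/d))$ cost over the $\OO(\log n)$ rounds. The paper's own proof is a one-line appeal to the analogy with Theorem~\ref{thm:degeneracy}; you have simply spelled out the details it leaves implicit, including the observation that the Query-Scheme only needs to certify non-edges incident to vertices of induced degree at most $4d$.
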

\begin{proof}
    The proof is analogous to the proof of Theorem~\ref{thm:degeneracy} except that we use a $d$-Query-Scheme similarly to Theorem~\ref{thm:max_deg_deterministic} to eliminate randomness.
\end{proof}

\subsubsection{Unknown $d$}

The idea for adjusting the algorithm for graphs of unknown parameter value works similarly to before, but some care is needed to identify that our guess for the degeneracy was too low.

\begin{theorem}\label{thm:degeneracy_unknown}
    There exist an adaptive randomized algorithm and an adaptive deterministic algorithm that solve \gr with $\OO\big(d^2 \log^2 n\big)$ and $\OO\big(d^3 \log(n) \log(n / d)\big)$ CC queries, respectively. The output of the randomized algorithm is correct with high probability.
\end{theorem}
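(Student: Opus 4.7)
The plan is to apply a doubling strategy analogous to Theorems~\ref{thm:max_deg_unknown} and~\ref{thm:tw_unknown}, but to be careful that the test used to detect an underestimate of $d$ is based on information the algorithm can actually observe. I would maintain a current guess $D$ (initially $D=1$) and, inside the outer while loop of Algorithm~\ref{alg:degeneracy}, run one invocation of Algorithm~\ref{alg:max_deg_randomized} (or the deterministic variant from Theorem~\ref{thm:max_deg_deterministic}) on the current remaining vertex set $V'$ using $4D$ in place of the maximum degree. Let $G'$ be its output and let $L\subseteq V'$ be the set of vertices whose degree in $G'$ is at most $4D$. If $|L|\ge |V'|/2$, we declare the iteration successful: we remove the corresponding non-edges from $\supgraph$ and set $V'\gets V'\setminus L$. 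Otherwise we discard the iteration and double $D$, repeating on the same $V'$.

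The key structural observation that makes this safe even when the guess is too small is that $G'$ is always a supergraph of the hidden graph, so every vertex $v\in L$ has true degree at most $4D$ in $G$. By the analysis inside the proof of Theorem~\ref{thm:max_deg_randomized} (applied to $4D$), with probability at least $1-1/n$ the invocation correctly identifies all edges incident to such vertices, so the update to $\supgraph$ is correct. Conversely, once $D\ge d$ the degeneracy property guarantees that at least $|V'|/2$ vertices have degree at most $4d\le 4D$ in the hidden graph; their neighborhoods in $G'$ are then correctly identified w.h.p., so they all lie in $L$ and the iteration succeeds. Consequently $D$ is doubled at most $\OO(\log d)$ times in total throughout the entire execution, and at termination $D\le 2d$.

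For the query count, note that a single invocation with current guess $D$ uses $\OO(D^2\log n)$ queries in the randomized case and $\OO(D^3\log(n/D))$ in the deterministic case. The outer while loop has at most $1+\log_2 n$ successful iterations (since each halves $|V'|$), each contributing at most $\OO(d^2\log n)$ or $\OO(d^3\log(n/d))$ queries, and the failed iterations across the whole run contribute a geometric sum dominated by the final value of $D\le 2d$, giving an additive $\OO(d^2\log n)$ or $\OO(d^3\log(n/d))$. Combining these yields the claimed $\OO(d^2\log^2 n)$ and $\OO(d^3\log(n)\log(n/d))$ bounds. A union bound over the $\OO(\log n)$ invocations of the randomized subroutine, each correct with probability at least $1-1/n$, shows the overall output is correct with high probability.

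The main obstacle to be careful about is justifying the test $|L|\ge |V'|/2$. A naive test, such as checking directly whether the assumed degeneracy bound is violated by $G'$, would not be observable without extra queries; the trick is that $L$ is defined purely from the output $G'$, and because $\supgraph$ can only lose edges, membership in $L$ already certifies that the edges removed around those vertices are trustworthy. Everything else (bounding the number of doublings, bounding the number of successful rounds, summing the costs) is a direct adaptation of the arguments already used in Theorems~\ref{thm:degeneracy} and~\ref{thm:max_deg_unknown}.
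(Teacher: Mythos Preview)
Your proposal is correct and follows essentially the same doubling strategy as the paper's proof, using the same observable test $|L|\ge |V'|/2$ to decide when to double the guess. The one organizational difference is that you double $D$ inside the while loop and continue on the current $V'$, whereas the paper restarts Algorithm~\ref{alg:degeneracy} from scratch after each doubling; both yield the same asymptotic bounds. Your explicit use of the fact that the subroutine's output $G'$ is always a supergraph of $G[V']$---so that membership in $L$ already certifies the removed non-edges are genuine regardless of whether the guess $D$ is correct---is a point the paper leaves implicit, and it makes your correctness argument cleaner.
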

\begin{proof}
    We start by guessing that $D = 1$ is an upper bound for $d$, and then run Algorithm~\ref{alg:degeneracy} or the algorithm from Theorem~\ref{thm:degeneracy_deterministic} as if the known value of $d$ were $D$. If we use the former algorithm and it runs correctly or we use the latter algorithm, then we identify that at least half of the vertices have degree at most $4(D+1)$ in each iteration if $D \ge d$. This may also occur if $D < d$, but if does not occur, then we know that $D < d$ and we double the value of $D$ and restart the algorithm. The query complexity is again dominated by the last round, for which $D < 2d$.

    It remains to show the high probability of the correctness of the randomized algorithm. Recall that the probability of Algorithm~\ref{alg:degeneracy} succeeds is at least $(1 - 1/n)^{1 + \log_2 n}$, and note that if the algorithm succeeds on all $\OO(\log d)$ rounds of adaptivity, then the output is correct. The probability of this is at least $(1 - 1/n)^{\OO(\log^2 n)} \approx (1/\mathrm{e})^{\OO(\log^2(n) / n)}$, which tends to $1$ as $n$ increases.
\end{proof}

\section{Lower Bounds}\label{sec:lower_bounds}

In this section, we show that the query complexity of our randomized algorithms cannot be improved by a factor larger than $\OO(\log n)$ with a proof similar to the lower bound proof of Konrad et al.~\cite{Konrad24}.
Whilst Black et al.~\cite{Black25} have showed a tighter upper bound with respect to $m$ for the $\#$CC oracle, their information-theoretical argument does not work here because our oracle gives more bits of information than theirs. 
However, we confirm that their non-adaptive lower bound is applicable for the CC oracle, showing that non-adaptive algorithms require $\Omega(n^2)$ CC queries if they are not given additional information about the structure.

\begin{theorem}\label{thm:lower_bound_mkd}
    No algorithm can solve \gr with $o(m)$, $o(k^2)$, $o(\Delta^2)$, or $o(d^2)$ CC queries with probability greater than $1/2$.
\end{theorem}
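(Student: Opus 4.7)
The plan is to establish all four lower bounds through a single Yao-style construction followed by an information-theoretic transcript-counting argument. Fix a parameter $p$ with $2p \leq n$ and let $\mathcal F_p$ denote the family of graphs on $V$ defined as follows: pick disjoint subsets $A, B \subseteq V$ with $|A| = |B| = p$, always include the two cliques $K[A]$ and $K[B]$, add an arbitrary bipartite subgraph $H \subseteq A \times B$, and leave every vertex of $V \setminus (A \cup B)$ isolated. Since $H$ varies over all $2^{p^2}$ bipartite graphs between $A$ and $B$, $|\mathcal F_p| = 2^{p^2}$. Every $G \in \mathcal F_p$ has $m = \Theta(p^2)$, and each of $\Delta$, $k$, $d$ lies in $[p-1,\,2p-1]$: the $p$-clique $K[A]$ supplies the lower bounds, while $|A \cup B| = 2p$ and the trivial tree decomposition with $A \cup B$ as its single non-singleton bag supply the upper bounds. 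Hence any $\Omega(p^2)$ lower bound on identifying the hidden $H$ yields, by rescaling $p$, simultaneous $\Omega(m)$, $\Omega(\Delta^2)$, $\Omega(k^2)$, and $\Omega(d^2)$ lower bounds.

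The structural core of the argument is a \textbf{one-bit-per-query} observation. For any query $Q \subseteq V$, write $Q_A := Q \cap A$ and $Q_B := Q \cap B$. Then $G[Q]$ consists of the internal cliques $K[Q_A]$ and $K[Q_B]$, the restricted bipartite graph $H[Q_A, Q_B]$, and singletons on $Q \setminus (A \cup B)$. Because $Q_A$ and $Q_B$ are already internally connected by their cliques, the choice of $H$ can influence the component structure of $G[Q]$ only by merging $Q_A$ and $Q_B$ into a single component, which occurs if and only if $H[Q_A, Q_B] \neq \emptyset$. The oracle's entire answer is therefore a deterministic function of $Q$ together with the single indicator bit $f_Q(H) := \mathbf{1}[H \cap (Q_A \times Q_B) \neq \emptyset]$.

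Given this, the proof closes by transcript counting. Take the uniform distribution $\mathcal D_p$ on $\mathcal F_p$ and fix any deterministic adaptive algorithm that performs at most $T$ CC queries and outputs a graph $\hat G$. By the one-bit observation, the entire transcript of query-answer pairs on input $H$ lies in a set of size at most $2^T$, so the map $H \mapsto \hat G(H)$ factors through this transcript space and attains at most $2^T$ distinct values. Since each fixed output is correct for at most one $H \in \mathcal F_p$, the algorithm succeeds on at most $2^T$ of the $2^{p^2}$ inputs, and its success probability under $\mathcal D_p$ is at most $2^T / 2^{p^2}$. This falls below $1/2$ as soon as $T \leq p^2 - 2$, and Yao's minimax principle lifts the bound to randomized algorithms, giving the claimed $\Omega(p^2)$.

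I expect the main obstacle to be pinning down the one-bit-per-query step cleanly, and in particular defending the role played by the two planted cliques. Without $K[A]$ and $K[B]$, a single CC query could split $H[Q_A, Q_B]$ into many connected components, leaking up to $\Theta(\min(|Q_A|, |Q_B|) \log p)$ bits per query and costing a polynomial factor in the final bound; the cliques are precisely what collapses the query output to one of exactly two partitions. A minor bookkeeping item is that when $Q_A$ or $Q_B$ is empty the query is even less informative (zero bits), but this only tightens the transcript count and is easily folded in.
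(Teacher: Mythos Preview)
Your proposal is correct and essentially identical to the paper's own proof: the paper also plants two disjoint $p$-cliques with an arbitrary bipartite graph between them, makes the same one-bit-per-query observation (phrased as ``exactly two possible outputs'' for any CC query), and closes with the same Yao-plus-transcript-counting argument to get $\Omega(p^2)$ queries and hence the four parameterized lower bounds. Your write-up is slightly more explicit about why the cliques are essential and about the degenerate $Q_A=\emptyset$ case, but there is no substantive difference in approach.
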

\begin{proof}
    Fix two disjoint subsets of vertices $C_1, C_2 \subseteq V$ of some size $\eta$ and consider the family of graphs $\mathcal{G}_{\eta}$ where both $C_1$ and $C_2$ induce a clique and there is an arbitrary subset of edges between $C_1$ and $C_2$. Clearly, $|\mathcal{G}_{\eta}| = 2^{\eta^2}$. 

    Suppose now that there is a randomized algorithm that outputs the correct graph for each hidden graph $G \in \mathcal{G}_{\eta}$ with probability greater than $1/2$ while performing at most $\eta^2 - 1$ queries. By Yao's lemma, there is then a deterministic algorithm that performs $\eta^2 - 1$ queries and outputs the correct graph for more than half of the hidden graphs $G \in \mathcal{G}_{\eta}$. 
    However, observe that for any CC query on a set $S \subseteq V$, there are exactly two possible outputs over the graphs $G \in \mathcal{G}_{\eta}$: the vertices in $S \setminus (C_1 \cup C_2)$ are always in connected components of size $1$ and the remaining connected components induced by $S$ are either $S \cap C_1$ and $S \cap C_2$ or just $S$. Consequently, the deterministic algorithm can identify in $\eta^2 - 1$ queries only $2^{\eta^2 - 1} \le |\mathcal{G}_{\eta}|/2$ hidden graphs, resulting in a contradiction. Thus, $\Omega(\eta^2)$ queries are needed. We have $k, \Delta, d \le 2\eta$ and $m \le 2\eta^2$ for this family of hidden graphs, completing the proof.  
\end{proof}

\begin{corollary}[\cite{Black25}]\label{cor:lower_bound_n}
    No non-adaptive algorithm can solve \gr with $o(n^2)$ CC queries with probability greater than $1/2$ even if $k=2$ and $m = \OO(n)$.
\end{corollary}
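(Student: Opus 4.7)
The plan is not to invent a new hard instance but to re-run the non-adaptive $\Omega(n^2)$ lower bound that Black et al.~\cite{Black25} established for the weaker $\#$CC oracle, and to verify the two bookkeeping points that make their argument transfer to the present setting: that the stronger CC oracle still fails to distinguish their adversarial pair, and that the hard family can be embedded inside $2$-trees so that $k = 2$ and $m = \OO(n)$ hold.

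I would begin by reproducing Black et al.'s construction, namely a family $\mathcal{G}$ of graphs on $V$ together with a distribution $\mu$ over $\mathcal{G}$, and their main combinatorial step: any deterministic non-adaptive algorithm with $t = o(n^2)$ query sets $Q_1, \ldots, Q_t$ produces two distinct graphs $G, G' \in \mathcal{G}$ whose $\#\CC$-responses coincide on every $Q_i$. Yao's minimax principle then lifts the bound to randomized algorithms with success probability above $1/2$. So far this is exactly Black et al. The only substantive new step is to argue that the same pair $(G,G')$ is already indistinguishable under the full CC oracle, not merely its cardinality projection. This reduces to checking that the local edge modification relating $G$ and $G'$ in their proof lies inside a single connected component of both $G[Q_i]$ and $G'[Q_i]$: no vertex of $Q_i$ migrates between components, so $\CC_G(Q_i) = \CC_{G'}(Q_i)$ as partitions for every $i$, and the CC oracle gains no distinguishing power over $\#\CC$ on $\mathcal{G}$.

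Finally, I would verify the parameter side-conditions. Since a $2$-tree on $n$ vertices has exactly $2n-3$ edges and treewidth $2$, once the skeleton of $\mathcal{G}$ is embedded inside a fixed $2$-tree on $V$, every $G \in \mathcal{G}$ inherits $k \leq 2$ and $m = \OO(n)$ for free; Black et al.'s construction is already sparse enough that such an embedding is possible without shrinking the family. The main obstacle I anticipate is the case analysis in step two, because Black et al.\ only track component counts and not partitions; once their swap is shown to be partition-preserving on each $Q_i$, the corollary follows immediately.
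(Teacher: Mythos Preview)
Your proposal is correct and follows essentially the paper's approach: the paper makes Black et al.'s construction explicit as $G_{uv}$ (the $n-2$ remaining vertices each adjacent to both $u$ and $v$) versus $G_{uv}^+ = G_{uv} + uv$, and your anticipated ``case analysis in step two'' collapses to the one-line observation that any query containing $u$, $v$, and at least one further vertex already places $u$ and $v$ in the same component, so only the query $\{u,v\}$ itself can distinguish the pair. The paper closes with a linearity-of-expectation argument (each of the $\binom{n}{2}$ pairs must appear as a query with probability $>1/2$) rather than Yao's principle, and verifies $k=2$, $m=\OO(n)$ directly since $G_{uv}^+$ is a $2$-tree---both cosmetic differences from your plan.
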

\begin{proof}
    Let $G_{uv}$ denote a graph on $n$ vertices such that we start by having the vertices $u$ and $v$, and then add $n - 2$ vertices that are connected to $u$ and $v$. Denote by $G_{uv}^+$ the graph obtained from $G_{uv}$ by adding the edge $uv$. Clearly, $G_{uv}$ and $G_{uv}^+$ have treewidth $2$ and $m = \OO(n)$.

    Suppose that the hidden graph is either $G_{uv}$ or $G_{uv}^+$ but the non-adaptive algorithm does not perform the query $\{u, v\}$. Then, we cannot distinguish between them since $u$ and $v$ would not be in distinct connected components for any of the performed CC queries. After performing all the queries, the best one can do is to guess between $G_{uv}$ and $G_{uv}^+$, resulting in a success probability $1/2$. For the algorithm to work correctly with probability greater than $1/2$ for any graph $G_{uv}$ or $G_{uv}^+$, the non-adaptively constructed set of queries $\mathcal{Q}$ has to include the query $\{u, v\}$ for all distinct vertices $u, v \in V$ with probability greater than $1/2$. By the linearity of expectation, the expected size of $\mathcal{Q}$ would then be
    \[ \sum_{\substack{u, v \in V\\u \neq v}} \Pr(\{u, v\} \in \mathcal{Q}) > \sum_{\substack{u, v \in V\\u \neq v}} \frac{1}{2} = \binom{n}{2} / 2. \]
    In other words, for any algorithm that always performs $o(n^2)$ queries, there is an instance on which it fails with probability at least $1/2$.
\end{proof}

\section{Relationship with Other Oracles}

In this section, we compare our CC oracle to the connected component counting oracle, maximal independent set oracle, and the separation oracle.

\subsection{Relationship with the Connected Component Counting Oracle}

We start by proving that it is not sufficient to use the {connected component counting}~($\#$CC) \emph{oracle} of Black et al.~\cite{Black25} for obtaining our results, since $\Omega(n / \log n)$ $\#$CC queries are needed for already very constrained graph families. The main observation for this result is that the $\#$CC oracle has $n + 1$ possible distinct values it can output, and so in $t$ queries we can correctly identify only $(n + 1)^t$ graphs. By constructing a family of graphs larger than that, we can show that more than $t$ queries are needed. Similar combinatorial arguments were used by Michel and Scott~\cite{Michel25} and Black et al.~\cite{Black25} to prove lower bounds for the maximal independent set oracle in bounded degree graphs and for the $\#$CC oracle in graphs with $m$ edges.

\begin{theorem}\label{thm:counting-is-weak}
    No algorithm can solve all instances of \gr with maximum degree, treewidth, or degeneracy at most $p$ with $\Omega(np / \log n)$ $\#$CC queries with probability greater than $1/2$.
\end{theorem}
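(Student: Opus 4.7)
The plan is to mimic the information-theoretic counting argument sketched right before the theorem, combined with Yao's minimax principle. The three ingredients are: a large family of ``hard'' graphs satisfying the parameter bound, the observation that a single $\#$CC query returns only one of $n+1$ possible values, and the usual reduction from randomized to deterministic algorithms under the uniform distribution on inputs.

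First I would construct a family $\mathcal{G}$ of at least $2^{\Omega(np)}$ graphs on the vertex set $V$, each of which simultaneously has maximum degree, treewidth, and degeneracy at most $p$. The clean construction is to split $V$ into $\lfloor n/(2p) \rfloor$ groups of $2p$ vertices and, within each group, fix a balanced bipartition $(A_i, B_i)$ with $|A_i|=|B_i|=p$, then allow any subset of the $p^2$ possible edges between $A_i$ and $B_i$ and no edges across groups. Each of the $\lfloor n/(2p)\rfloor$ groups is an arbitrary subgraph of $K_{p,p}$, so it has maximum degree, treewidth, and degeneracy at most $p$, and these parameters are preserved under disjoint union. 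Since the edge choices across groups are independent, $|\mathcal{G}| \ge 2^{p^2 \cdot \lfloor n/(2p)\rfloor} = 2^{\Omega(np)}$.

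Next I would argue that any algorithm solving \gr on $\mathcal{G}$ with probability strictly greater than $1/2$ after $t$ $\#$CC queries must satisfy $t = \Omega(np/\log n)$. By Yao's minimax principle, it is enough to consider deterministic algorithms that correctly identify the hidden graph on a strict majority of $\mathcal{G}$ under the uniform distribution. Any such deterministic algorithm is a decision tree in which each internal node asks a $\#$CC query and branches on the returned integer in $\{0,1,\dots,n\}$, so the branching factor is at most $n+1$ and the tree has at most $(n+1)^t$ leaves. Two hidden graphs that reach the same leaf receive the same output, so at most $(n+1)^t$ graphs in $\mathcal{G}$ can be identified correctly. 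The majority condition therefore forces $(n+1)^t \ge |\mathcal{G}|/2 \ge 2^{\Omega(np)-1}$, which after taking logarithms gives $t \log(n+1) = \Omega(np)$, i.e. $t = \Omega(np/\log n)$.

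The main obstacle is really just the family construction, since the counting step is straightforward and mirrors the arguments already used by Michel--Scott and Black et al.\ that are cited in the paragraph before the theorem. Specifically, I would double-check that the three parameters (maximum degree, treewidth, degeneracy) really are all bounded by $p$ for arbitrary subgraphs of $K_{p,p}$; the degree bound is immediate, treewidth is at most $p$ because either side of the bipartition is a separator whose removal leaves an independent set, and degeneracy is at most $p$ because every induced subgraph contains a vertex from the smaller side, which has degree at most $p$. With that verified, the theorem follows.
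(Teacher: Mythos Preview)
Your proposal is correct and follows the same overall scheme as the paper: exhibit a family of $2^{\Omega(np)}$ graphs whose relevant parameters are all bounded by $p$, then apply Yao's principle together with the observation that a depth-$t$ $\#$CC decision tree has at most $(n+1)^t$ leaves.

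The only real difference is the choice of hard family. You use disjoint copies of subgraphs of $K_{p,p}$, which gives $2^{p^2\lfloor n/(2p)\rfloor}$ graphs and makes the parameter bounds completely transparent (maximum degree $\le p$ is immediate, and degeneracy and treewidth follow at once from that and from the bipartite structure). The paper instead partitions the vertices into $n/p$ blocks of size $p$, puts a clique on each block, threads a Hamiltonian path through all vertices, and lets the family consist of all subgraphs that retain the path edges; this yields $2^{\binom{p-1}{2}\cdot n/p}$ graphs, all of which are connected. Your construction is arguably cleaner and gives a slightly larger family, while the paper's construction has the minor extra feature that the lower bound already holds for connected graphs. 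Either family plugs into the identical counting argument, so the two proofs are interchangeable.
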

\begin{proof}
    Without loss of generality, assume that $p$ divides $n$. Let $G$ be a graph on $n$ vertices $v_1, v_2, \dots, v_n$ where there is an edge $v_iv_j$ if and only if $\lfloor(i - 1) / p\rfloor = \lfloor(j - 1) / p\rfloor$ or $i = j - 1$. The maximum degree, treewidth, and degeneracy of $G$ are all at most $p$.
    
    We can now construct a family of graphs $\mathcal{G}$ by considering all subgraphs of $G$ that retain all edges of the form $v_iv_{i+1}$. The size of the family is then
    \[ |\mathcal{G}| = 2^{\binom{p - 1}{2} \cdot \frac{n}{p}} = (n + 1)^{\binom{p - 1}{2} \cdot \frac{n}{p} \cdot \log_{n + 1}(2)}. \]

    Suppose now that there is a randomized algorithm that outputs the correct graph for each hidden graph $G' \in \mathcal{G}$ with probability greater than $1/2$ while performing at most $t \coloneqq \binom{p - 1}{2} \cdot \frac{n}{p} \cdot \log_{n + 1}(2) - 1$ $\#$CC queries. Then, by Yao's lemma, there is a deterministic algorithm that performs $t$ and outputs the correct graph for more than half of the hidden graphs $G' \in \mathcal{G}$. However, a deterministic algorithm can only have $(n + 1)^t \le |\mathcal{G}| / (n+1)$ distinct outputs if it performs $t$ queries, and so the output is incorrect for more than half of the hidden graphs.
\end{proof}

\subsection{Relationship with the Maximal Independent Set Oracle}

We next compare the CC oracle against the {\em maximal independent set} (MIS) \emph{oracle} of Konrad et al.~\cite{Konrad24}, which returns a maximal independent set of $G[S]$. Note that the MIS oracle can be adversarial, that is, it may return the ``least useful'' maximal independent set.
We start this section by demonstrating that CC queries are more powerful than MIS queries on some instances.

\begin{theorem} \label{thm:cc_powerful}
    There are instances of \gr that take $\Omega(n)$ MIS queries to solve but $\OO(\log n)$ CC queries.
\end{theorem}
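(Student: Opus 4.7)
The plan is to construct a family of $2^{n-1}$ graphs on which one CC query reveals everything but any algorithm using MIS queries must perform $\Omega(n)$ queries. Fix a distinguished vertex $v^* \in V$ and, for each subset $S \subseteq V \setminus \{v^*\}$, let $G_S$ be the graph whose only edges are $\{v^*x : x \in S\}$, so that $v^*$ is connected exactly to the vertices of $S$ and every other vertex is isolated. The task is to identify $S$.

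For the CC upper bound, the single query $\CC(V)$ returns the components of $G_S$: the component containing $v^*$ is $\{v^*\} \cup S$ and every remaining vertex is a singleton. Since $v^*$ is a known parameter of the family, the edge set $\{v^*x : x \in S\}$ is read off immediately, so $\OO(1) = \OO(\log n)$ CC queries suffice.

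For the MIS lower bound I first establish the following structural fact: for any query set $S' \subseteq V$, the maximal independent sets of $G_S[S']$ are exactly (i) the set $S'$ itself, when $v^* \notin S'$ or $S' \cap S = \emptyset$, and (ii) the two sets $S' \setminus S$ and $S' \setminus \{v^*\}$, when $v^* \in S'$ and $S' \cap S \neq \emptyset$. The delicate point is checking (ii): excluding $v^*$ forces including every element of $S' \setminus \{v^*\}$, since otherwise some omitted vertex could be added back, while including $v^*$ forces excluding precisely the neighbours $S \cap S'$.

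The main step is then an adversary argument. The adversary maintains a set $\mathcal{P}$ of still-consistent subsets $S$, initially all of $2^{V \setminus \{v^*\}}$. Given a query $S'$ with $v^* \in S'$ (queries without $v^*$ have unique MIS $S'$ and reveal nothing), partition $\mathcal{P}$ into $\mathcal{P}_0 = \{S \in \mathcal{P} : S \cap S' = \emptyset\}$ and its complement $\mathcal{P} \setminus \mathcal{P}_0$. By the structural fact, the answer $S'$ is a legal MIS for every $S \in \mathcal{P}_0$ and the answer $S' \setminus \{v^*\}$ is a legal MIS for every $S \in \mathcal{P} \setminus \mathcal{P}_0$, so the adversary returns the answer corresponding to the larger side. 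Each query therefore at most halves $|\mathcal{P}|$, so reducing it from $2^{n-1}$ to a singleton requires at least $n-1$ queries; Yao's minimax principle transfers the bound to randomized algorithms, yielding the claimed $\Omega(n)$ lower bound.
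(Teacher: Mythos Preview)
Your proof is correct and uses the same family of instances as the paper (stars and sub-stars centred at a fixed vertex), but the two arguments are organised differently. For the MIS lower bound, the paper fixes the full star, lets the adversary always answer $Q_i\setminus\{v\}$, and observes that after $\ell$ queries this transcript is still consistent with the sub-star whose leaves are $\{\max_{u\in Q_i\setminus\{v\}}u:i\le\ell\}$, forcing $\ell\ge n-1$; your version instead runs a clean information-theoretic halving adversary over the whole family $\{G_S\}$, noting that each query has only two possible MIS answers and hence reveals at most one bit. Your argument is a bit more systematic and immediately yields the exact bound $n-1$, while the paper's is shorter and more ad hoc. For the CC upper bound, the paper invokes its general treewidth algorithm (Theorem~\ref{thm:tw_derandom}) to get $\OO(\log n)$ queries for trees, whereas you exploit knowledge of the family to finish in a single query; both are legitimate, but the paper's route has the advantage of not assuming the algorithm knows the hidden graph lies in the star family. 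One small remark: your appeal to Yao's principle for the randomized extension is correct in spirit but deserves a sentence more care in the presence of an adversarial oracle---the cleanest justification is that your adversary's responses depend only on the query history, not on the hidden $S$, so for every fixing of the random coins at least $2^{n-1-t}$ graphs produce identical transcripts.
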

\begin{proof}
    Take a star graph on $n$ vertices centered around the vertex $v$ and consider query sets $Q_1, Q_2, \dots, Q_\ell$. Suppose our adversarial MIS oracle returns the set $Q_i \setminus \{v\}$ for the query set $Q_i$ for all $i \in \{1, 2, \dots, \ell\}$. Then, the results are also consistent with a graph with edges only between $v$ and vertices $\max_{u \in Q_i \setminus \{v\}} u$ for all $i$, under some arbitrary ordering of the vertices. Thus, we need $\ell$ to be at least $n-1$ to recover the original graph. Meanwhile, the graph is a tree, and thus the instance is solvable by $\OO(\log n)$ CC queries by Theorem~\ref{thm:tw_derandom}.
\end{proof}

The following theorems show that a MIS oracle cannot be simulated with a CC oracle without calling it polynomially many times in the size of the queried set, and vice versa.

\begin{theorem}\label{thm:sim_cc}
    We can compute an MIS on $S$ with $\OO(|S|)$ CC queries, and there are instances where $\Omega(|S|)$ queries are needed.
\end{theorem}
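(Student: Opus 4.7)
The statement has two parts, and I would handle each separately. For the upper bound, the plan is to simulate the natural greedy maximal independent set procedure using CC queries. Order the vertices of $S$ arbitrarily as $v_1, \dots, v_{|S|}$ and maintain a set $I$, initially empty. At step $i$, issue the single query $\CC(I \cup \{v_i\})$; since $I$ is independent by the induction hypothesis, the returned partition contains $v_i$ as a singleton precisely when $v_i$ has no neighbor in $I$, in which case I would add $v_i$ to $I$. After processing every vertex, $I$ is independent by construction, and for every $v \notin I$ there was a step at which $v$ shared a component with some vertex of $I$, so $I$ is maximal. This uses at most $|S|$ CC queries.

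For the lower bound, I would take $|S| = n$ even and consider the family of graphs on $S$ of the form $K_n \setminus M$ where $M$ ranges over perfect matchings of $S$. In such a graph every vertex has a unique non-neighbor (its partner in $M$), so no singleton is maximal, while for $uw \in M$ the pair $\{u, w\}$ is a maximal independent set; hence the algorithm must output one of the $n/2$ matching edges of the hidden $M$. Next I would observe that a CC query $\CC(Q)$ carries information only when $|Q|=2$: for $|Q| \ge 3$ the graph $G[Q] = K_{|Q|}$ minus an induced sub-matching is still connected, so the oracle returns a single component regardless of $M$, and queries of size at most $1$ are trivial. A pair query $\CC(\{u,v\})$ returns two singletons iff $\{u,v\} \in M$, so the task reduces to locating an edge of an unknown perfect matching by membership queries of the form ``is $\{u,v\} \in M$?''.

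I would then close with Yao's principle, drawing $M$ uniformly from the $(n-1)!!$ perfect matchings of $S$ and bounding the success probability of any deterministic algorithm making at most $t$ pair queries. By symmetry, conditioned on the history of answers, each unqueried pair lies in $M$ with probability $\OO(1/n)$; by a union bound the probability that any of the $t$ adaptive queries is answered ``yes'' is $\OO(t/n)$, and conditioned on all answers being ``no'' the best guess at an unqueried pair is correct with probability $\OO(1/n)$. Thus the total success probability is $\OO(t/n)$, which is at most $1/2$ unless $t = \Omega(n)$. The main technical obstacle is justifying the conditional-probability estimate after several ``no'' answers; this can be handled by a direct counting argument showing that for any set of $t = o(n)$ forbidden pairs a constant fraction of perfect matchings avoid them all, and that within this set each remaining pair still appears in a $\Theta(1/n)$ fraction of matchings.
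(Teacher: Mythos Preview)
Your upper bound is exactly the paper's argument: greedily build an independent set, using one CC query per vertex of $S$ to test whether the candidate stays independent.

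Your lower bound is correct but takes a genuinely different route from the paper. The paper uses the single instance where $G[S]$ is a clique: every MIS is a singleton $\{v\}$, and the only CC query that can certify the edge $uv$ is the pair query $\{u,v\}$ (any larger query containing $u,v$ stays connected through a third vertex). Hence a deterministic algorithm that outputs $\{v\}$ on the clique must have queried all $|S|-1$ pairs $\{u,v\}$, or else the answers are equally consistent with the clique minus $uv$, on which $\{v\}$ is not maximal. That is a two-line adversary argument, but as written it is a deterministic lower bound. Your construction with $K_n\setminus M$ for a random perfect matching $M$ is more work, but it buys you more: it yields directly, via Yao, an $\Omega(|S|)$ lower bound against \emph{randomized} algorithms with constant success probability, and it sidesteps the issue that in the paper's clique instance the algorithm \emph{is} correct whenever it outputs a singleton. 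Both arguments share the key structural observation that queries of size $\ge 3$ are uninformative on clique-minus-matching instances. Your conditional-probability worry can be bypassed entirely by treating the algorithm's final output as a $(t{+}1)$-st query and applying a plain union bound: along the all-``no'' path the $i$-th query is a fixed pair, and $\Pr[f_i\in M]\le 1/(n-1)$, so the success probability is at most $(t{+}1)/(n-1)$.
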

\begin{proof}
    For the upper bound, we maintain an independent set $T$. For each node $v \in S$, try adding it to $T$ and test if it is still an independent set, that is, are all vertices of $T$ in distinct connected components.

    For the lower bound, assume that the subgraph induced by $S$ is a clique. All independent sets are thus of size $1$. To simulate an MIS oracle by outputting some set $\{v\} \subseteq S$, our algorithm would need to ensure all the other vertices in $S$ are neighbors of $v$. However, if there is any pair of vertices $\{u, v\} \subseteq S$ that is not queried, then the outputs for the CC queries would also be consistent with a graph obtained by taking a clique on $S$ and removing the edge $uv$. Hence, we cannot ensure that $\{v\}$ is an MIS without performing $|S| - 1$ CC queries.
\end{proof}

\begin{theorem}\label{thm:sim_mis}
    We can compute the connected components on $S$ with $\OO(|S|^2)$ MIS queries, and there are instances where $\Omega(|S|^2)$ queries are needed.
\end{theorem}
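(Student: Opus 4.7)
The plan is to treat the two halves of Theorem~\ref{thm:sim_mis} separately. For the $\OO(|S|^2)$ upper bound, I would simply query the MIS oracle on every pair $\{u,v\}\subseteq S$: the returned maximal independent set is $\{u,v\}$ exactly when $uv$ is a non-edge, and a singleton otherwise, so after $\binom{|S|}{2}$ queries the entire edge set of $G[S]$ is known and the connected components can be read off directly.

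For the matching lower bound, I would exhibit a family of graphs on $S$ for which an adversarial oracle forces the algorithm to rule out at most one graph per query. Split $S$ into equal halves $A$ and $B$ and let $G_0$ be the graph in which $A$ and $B$ each induce a clique and no other edges are present. For each pair $(a,b)\in A\times B$, let $G_{a,b}$ be $G_0$ together with the single crossing edge $ab$. Then $G_0$ has two connected components $A$ and $B$, while every $G_{a,b}$ has the single component $A\cup B$, so determining the connected components already requires distinguishing $G_0$ from each of the $|A|\cdot|B|=\Theta(|S|^2)$ graphs $G_{a,b}$.

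The main step, and I expect the main obstacle, is to design an adversary that keeps $G_0$ consistent throughout while ruling out at most one $G_{a,b}$ per query. On a query $Q$ meeting both $A$ and $B$, the adversary picks $a^\ast\in Q\cap A$ and $b^\ast\in Q\cap B$ with $G_{a^\ast,b^\ast}$ still alive and returns $\{a^\ast,b^\ast\}$. The crucial verification is that this answer is a maximal independent set of $G_0[Q]$ and of every $G_{a,b}[Q]$ with $(a,b)\neq(a^\ast,b^\ast)$: independence holds because no crossing edge between $a^\ast$ and $b^\ast$ is present in those graphs, and maximality comes for free because every remaining vertex of $Q$ sits in $A$ or $B$ and is therefore dominated by $a^\ast$ or $b^\ast$ via a clique edge. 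Only $G_{a^\ast,b^\ast}$ is inconsistent with the answer, because there $a^\ast b^\ast$ is itself an edge. Queries that miss one of $A$ or $B$ induce a clique in every graph of the family and can be answered with any singleton, eliminating nothing.

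Putting the pieces together, after $t$ queries at most $t$ of the $\Theta(|S|^2)$ graphs $G_{a,b}$ have been ruled out, so if $t=o(|S|^2)$ both $G_0$ and some $G_{a,b}$ remain consistent and the algorithm cannot commit to the correct connected components. Invoking Yao's lemma, as in Theorem~\ref{thm:lower_bound_mkd}, then converts this into the stated $\Omega(|S|^2)$ lower bound against randomized algorithms that succeed with probability greater than $1/2$.
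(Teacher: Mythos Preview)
Your proposal is correct and follows essentially the same approach as the paper: the upper bound by querying all pairs is identical, and the lower bound uses the same two-clique construction with a single optional crossing edge, arguing that an adversarial oracle answering with one cross pair $\{a^\ast,b^\ast\}$ per query rules out at most one candidate graph at a time. Your write-up is in fact more explicit than the paper's (which simply asserts that any unqueried crossing doubleton leaves the corresponding edge undecided), and your added Yao step extends the bound to randomized algorithms, which the paper does not spell out here; the only minor omission is the corner case where every $G_{a^\ast,b^\ast}$ with $a^\ast\in Q\cap A$, $b^\ast\in Q\cap B$ is already dead, but then the adversary can return any such pair and eliminate nothing, so the bound is unaffected.
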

\begin{proof}
    For the upper bound, query all pairs. For the lower bound, assume that the subgraph induced by $S$ consists of two disjoint cliques $S_1$ and $S_2$ of size $|S|/2$. Suppose that we have not performed a MIS query on some pair $\{s_1, s_2\}$ with $s_1 \in S_1$ and $s_2 \in S_2$. Then, an adversarial oracle can always return a MIS that does not contain both $s_1$ and $s_2$, meaning that we cannot be certain that there is no edge between the two nodes. Since there are $\Theta(|S|^2)$ such pairs, we need that many MIS queries to be able to ensure that $S_1$ and $S_2$ are not connected.
\end{proof}

\subsection{Relationship with the Separation Oracle}\label{sec:separation}

Finally, we compare the CC oracle against the separation oracle used in machine learning. We also discuss implications of known results for the separation oracle.

The {\em separation} (Sep) {\em oracle} has three inputs: A pair of vertices $v$ and $w$ and a set of vertices $U \subseteq V \setminus\{v, w\}$. The oracle returns \verb|Yes| if $v$ and $w$ are in different connected components in the induced subgraph $G[V\setminus U]$. Otherwise, the oracle returns \verb|No|.

It is clear that the CC oracle is more powerful than the Sep oracle. Theorem~\ref{thm:sep_lb} shows that each Sep query can be answered by calling one CC query. On the other hand, Theorem~\ref{thm:sep_ub} that sometimes one needs $\Omega (|S|^2)$ Sep queries to answer one CC query.

\begin{theorem} \label{thm:sep_lb}
    We can compute a separation query $\Sep (v, w, U)$ with one CC query.
\end{theorem}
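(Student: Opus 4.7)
The plan is essentially to observe that a single CC query on the complement of the separating set already contains all the information needed to answer a Sep query. Given an instance $\Sep(v, w, U)$ with $U \subseteq V \setminus \{v, w\}$, I would perform the single query $\CC(V \setminus U)$. Note that both $v$ and $w$ belong to $V \setminus U$, so each of them appears in exactly one part of the returned partition of $V \setminus U$ into connected components of $G[V \setminus U]$.

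Then the answer to $\Sep(v,w,U)$ is read off directly: return \texttt{Yes} if $v$ and $w$ lie in different parts of the partition returned by $\CC(V \setminus U)$, and \texttt{No} otherwise. Correctness follows immediately from the definitions, since by definition $\CC(V \setminus U)$ returns the connected components of the induced subgraph $G[V \setminus U]$, which is exactly the graph with respect to which the separation query is asked.

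There is no real obstacle here; the statement is essentially a definitional observation. The only thing to emphasize in the write-up is that the output format of the CC oracle (a partition into components rather than just a count) is what makes this one-query reduction possible, in contrast with the $\#$CC oracle, where a single numerical answer cannot in general distinguish whether two specified vertices lie in the same component.
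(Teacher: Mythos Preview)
Your proposal is correct and follows exactly the same approach as the paper: query $\CC(V \setminus U)$ and answer \texttt{Yes} iff $v$ and $w$ lie in different returned components. The additional remark about the $\#$CC oracle is accurate but not needed for the proof itself.
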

\begin{proof}
   We can use the following procedure: Call $\CC (V\setminus U)$. If $v$ and $w$ are in different connected components, return \verb|Yes|, otherwise return \verb|No|.
\end{proof}

\begin{theorem} \label{thm:sep_ub}
    We can compute connected components on $S$ with $\OO (|S|^2)$ Sep queries and there are instances where $\Omega (|S|^2)$ queries are needed.
\end{theorem}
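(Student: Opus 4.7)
The plan is to establish the upper and lower bounds separately.

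\textbf{Upper bound.} For any distinct $u, v \in S$, the query $\Sep(u, v, V \setminus S)$ returns \texttt{Yes} exactly when $u$ and $v$ lie in different connected components of $G[V \setminus (V \setminus S)] = G[S]$. Performing this query for every unordered pair $\{u, v\} \subseteq S$ thus determines the pairwise ``same-component in $G[S]$'' relation, from which the partition of $S$ into connected components can be read off. This costs $\binom{|S|}{2} = \OO(|S|^2)$ Sep queries.

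\textbf{Lower bound.} I would exhibit a hidden-graph family on $V = S$ in which every pair must be tested individually. Let $\mathcal{F}$ consist of the empty graph $G_0$ together with the $\binom{|S|}{2}$ single-edge graphs $G_{uv}$ indexed by unordered pairs $\{u, v\} \subseteq S$, where $G_{uv}$ has the sole edge $uv$. The induced connected-component partitions of $S$ are pairwise distinct (only $G_{uv}$ merges $\{u, v\}$ into a non-singleton component), so any correct CC algorithm must distinguish $G_0$ from each $G_{uv}$.

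The key observation is that on $G_0$ every admissible call $\Sep(a, b, U)$ with $a \neq b$ returns \texttt{Yes}, while on $G_{uv}$ the same call returns \texttt{Yes} unless $\{a, b\} = \{u, v\}$ and $u, v \notin U$, in which case it returns \texttt{No}. Consequently the only queries capable of distinguishing $G_0$ from $G_{uv}$ are those of the form $\Sep(u, v, U)$ with $u, v \notin U$. Running a deterministic algorithm with hidden graph $G_0$ therefore produces a fixed query trace $Q_1, \dots, Q_T$, since every received answer is \texttt{Yes}. If for some pair $\{u, v\}$ no $Q_i$ has the distinguishing form, then executing the same algorithm on $G_{uv}$ would yield an identical trace and hence identical output, contradicting correctness on $G_{uv}$. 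Hence $T \ge \binom{|S|}{2} = \Omega(|S|^2)$. For randomized algorithms that succeed on every input with probability greater than $1/2$, the bound transfers by Yao's minimax principle applied to the uniform distribution on $\mathcal{F}$.

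\textbf{Main obstacle.} The only delicate point is the trace argument in the lower bound: one must observe that $G_0$ is ``featureless'' in the sense that every Sep answer on $G_0$ is \texttt{Yes}, so an adaptive algorithm is effectively non-adaptive on this input and must pre-commit to a dedicated query for each of the $\binom{|S|}{2}$ potential single-edge alternatives. Everything else follows directly from the definition of the Sep oracle.
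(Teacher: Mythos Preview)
Your proposal is correct and follows essentially the same approach as the paper: the upper bound queries $\Sep(u,v,V\setminus S)$ for all pairs, and the lower bound uses the empty graph versus single-edge graphs to force a dedicated query per pair. Your version is in fact more carefully argued than the paper's---you make the trace argument for adaptive algorithms explicit and invoke Yao's principle for the randomized case, whereas the paper's proof is terse and leaves these points implicit.
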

\begin{proof}
    For the upper bound, use the following procedure: For all pairs $v, w \in S$, query $\Sep (v, w, V\setminus S)$. If the answer is \verb|Yes|, then $v$ and $w$ are in the same connected component. Otherwise, they are in different connected components.

    For the lower bound, assume that the induced subgraph $G[V \setminus S]$ is an empty graph. However, if there is any pair of vertices $v$ and $w$ such that $\Sep (v, w, V\setminus S)$ is not queried, then the outputs from the separation queries would also be consistent with a graph with the edge $vw$. Hence, we cannot ensure that we have found all connected components without performing $\Omega (|S|^2)$ Sep queries.
\end{proof}

It is trivial to solve the \gr problem with $\OO(n^2)$ Sep queries. We just perform queries $\Sep (v, w, V\setminus \{v, w\})$ for each pair of vertices $v, w\in V, v\neq w$.

When separation queries are used in structure learning in Markov networks, one is not solely interested in the number of queries but also their sizes. Here, the size of a query~$\Sep (v, w, U)$ is $|U|$. This is motivated by the fact that in practice the oracle is replaced by a statistical test and the amount of data needed for reliable tests grows exponentially with respect to the size of the test. That is, one often wants to conduct only small tests, even if the number of tests increases.

Korhonen et al. \cite{Korhonen24} have shown that when the hidden graph has maximum pairwise connectivity $\lambda$, then one needs always to conduct at least one Sep query with size at least~$\lambda$. This result translates directly to the CC oracle. If the hidden graph has maximum pairwise connectivity $\lambda$ then the \gr problem cannot be solved by conducting queries for a scheme~$Q = \{Q_1, \ldots, Q_\ell\}$ if $\min_i |Q_i| > \lambda$.

\section{Conclusions}

We proposed a novel connected components oracle and showed that it requires only a (poly)logarithmic number of queries to solve graph reconstruction in graphs of bounded number of edges, maximum degree, treewidth, or degeneracy. These results were complemented by unconditional lower bounds and comparisons against the $\#$CC oracle, the MIS oracle, and the $\Sep$ oracle.

It remains open whether the gap between our lower and upper bounds can be tightened. Michel and Scott~\cite{Michel25} recently improved the lower bound for the MIS oracle from $\Omega(\Delta^2)$ to $\omega(\Delta^2)$, but their techniques do not appear to directly generalize for the CC oracle.

Another direction is to study how much weaker the CC oracle could be made such that we would still obtain similar upper bounds. For example, an oracle that might be weaker than the CC oracle but stronger than the $\#$CC oracle would be one that could answer the following query: Given a set $Q$, output one vertex from each connected component of $G[Q]$.

In the context of Markov networks where the separation oracle is used, the query complexity of graph reconstruction has been recently studied under constraints on the size of the queried sets \cite{Korhonen24}. For example, $n^{\Omega(\lambda)}$ separation queries of size $\lambda$ are needed when the maximum pairwise connectivity is $\lambda$. Here, we were able to reconstruct the graph in $\OO(\lambda^2 \log n)$ queries, but the queried sets were larger. A natural follow-up question would thus be to study how the connected components oracle performs with query sets of bounded size.

\bibliography{arxiv}

\begin{thebibliography}{10}

\bibitem{Abasi19}
Hasan Abasi and Nader~H. Bshouty.
\newblock On learning graphs with edge-detecting queries.
\newblock In {\em Algorithmic Learning Theory, {ALT} 2019}, volume~98 of {\em
  Proceedings of Machine Learning Research}, pages 3--30. {PMLR}, 2019.

\bibitem{Abrahamsen16}
Mikkel Abrahamsen, Greg Bodwin, Eva Rotenberg, and Morten St{\"{o}}ckel.
\newblock Graph reconstruction with a betweenness oracle.
\newblock In {\em Proceedings of the 33rd Symposium on Theoretical Aspects of
  Computer Science, {STACS} 2016}, volume~47 of {\em LIPIcs}, pages 5:1--5:14.
  Schloss Dagstuhl - Leibniz-Zentrum f{\"{u}}r Informatik, 2016.

\bibitem{Alon05}
Noga Alon and Vera Asodi.
\newblock Learning a hidden subgraph.
\newblock {\em {SIAM} J. Discret. Math.}, 18(4):697--712, 2005.

\bibitem{Alon04}
Noga Alon, Richard Beigel, Simon Kasif, Steven Rudich, and Benny Sudakov.
\newblock Learning a hidden matching.
\newblock {\em {SIAM} J. Comput.}, 33(2):487--501, 2004.

\bibitem{Alon95}
Noga Alon, Raphael Yuster, and Uri Zwick.
\newblock Color-coding.
\newblock {\em J. {ACM}}, 42(4):844--856, 1995.

\bibitem{Angluin08}
Dana Angluin and Jiang Chen.
\newblock Learning a hidden graph using ${O}(\log n)$ queries per edge.
\newblock {\em J. Comput. Syst. Sci.}, 74(4):546--556, 2008.

\bibitem{Beerliova06}
Zuzana Beerliova, Felix Eberhard, Thomas Erlebach, Alexander Hall, Michael
  Hoffmann, Mat{\'{u}}s Mihal{\'{a}}k, and L.~Shankar Ram.
\newblock Network discovery and verification.
\newblock {\em {IEEE} J. Sel. Areas Commun.}, 24(12):2168--2181, 2006.

\bibitem{Beigel01}
Richard Beigel, Noga Alon, Simon Kasif, Mehmet~Serkan Apaydin, and Lance
  Fortnow.
\newblock An optimal procedure for gap closing in whole genome shotgun
  sequencing.
\newblock In {\em Proceedings of the Fifth Annual International Conference on
  Computational Biology, {RECOMB} 2001}, pages 22--30. {ACM}, 2001.

\bibitem{Black25}
Hadley Black, Arya Mazumdar, Barna Saha, and Yinzhan Xu.
\newblock Optimal graph reconstruction by counting connected components in
  induced subgraphs.
\newblock In {\em The Thirty Eighth Annual Conference on Learning Theory,
  {COLT} 2025}, volume 291 of {\em Proceedings of Machine Learning Research},
  pages 315--343. {PMLR}, 2025.

\bibitem{Bouvel05}
Mathilde Bouvel, Vladimir Grebinski, and Gregory Kucherov.
\newblock Combinatorial search on graphs motivated by bioinformatics
  applications: {A} brief survey.
\newblock In {\em Proceedings of the 31st International Workshop
  Graph-Theoretic Concepts in Computer Science, {WG} 2005}, volume 3787 of {\em
  Lecture Notes in Computer Science}, pages 16--27. Springer, 2005.

\bibitem{Chow68}
C.~K. Chow and C.~N. Liu.
\newblock Approximating discrete probability distributions with dependence
  trees.
\newblock {\em {IEEE} Trans. Inf. Theory}, 14(3):462--467, 1968.

\bibitem{Cygan15}
Marek Cygan, Fedor~V. Fomin, Lukasz Kowalik, Daniel Lokshtanov, D{\'{a}}niel
  Marx, Marcin Pilipczuk, Michal Pilipczuk, and Saket Saurabh.
\newblock {\em Parameterized Algorithms}.
\newblock Springer, 2015.

\bibitem{Grebinski98}
Vladimir Grebinski and Gregory Kucherov.
\newblock Reconstructing a hamiltonian cycle by querying the graph: Application
  to {DNA} physical mapping.
\newblock {\em Discret. Appl. Math.}, 88(1-3):147--165, 1998.

\bibitem{Grebinski00}
Vladimir Grebinski and Gregory Kucherov.
\newblock Optimal reconstruction of graphs under the additive model.
\newblock {\em Algorithmica}, 28(1):104--124, 2000.

\bibitem{Kannan18}
Sampath Kannan, Claire Mathieu, and Hang Zhou.
\newblock Graph reconstruction and verification.
\newblock {\em {ACM} Trans. Algorithms}, 14(4):40:1--40:30, 2018.

\bibitem{Karger01}
David~R. Karger and Nathan Srebro.
\newblock Learning {M}arkov networks: maximum bounded tree-width graphs.
\newblock In {\em Proceedings of the Twelfth Annual Symposium on Discrete
  Algorithms, SODA 2001}, pages 392--401. {ACM/SIAM}, 2001.

\bibitem{Koller09}
Daphne Koller and Nir Friedman.
\newblock {\em Probabilistic Graphical Models - Principles and Techniques}.
\newblock {MIT} Press, 2009.

\bibitem{Konrad24}
Christian Konrad, Conor O'Sullivan, and Victor Traistaru.
\newblock Graph reconstruction via {MIS} queries.
\newblock In {\em Proceedings of the 16th Conference on Innovations in
  Theoretical Computer Science, {ITCS} 2025}, volume 325 of {\em LIPIcs}, pages
  66:1--66:19. Schloss Dagstuhl - Leibniz-Zentrum f{\"{u}}r Informatik, 2025.

\bibitem{Korhonen24}
Tuukka Korhonen, Fedor~V. Fomin, and Pekka Parviainen.
\newblock Structural perspective on constraint-based learning of {M}arkov
  networks.
\newblock In {\em International Conference on Artificial Intelligence and
  Statistics, {AISTATS} 2024}, volume 238 of {\em Proceedings of Machine
  Learning Research}, pages 1855--1863. {PMLR}, 2024.

\bibitem{Lick70}
Don~R. Lick and Arthur~T. White.
\newblock k-degenerate graphs.
\newblock {\em Canadian Journal of Mathematics}, 22(5):1082–1096, 1970.

\bibitem{Mathieu23}
Claire Mathieu and Hang Zhou.
\newblock A simple algorithm for graph reconstruction.
\newblock {\em Random Struct. Algorithms}, 63(2):512--532, 2023.

\bibitem{Michel25}
Lukas Michel and Alex~D. Scott.
\newblock Lower bounds for graph reconstruction with maximal independent set
  queries.
\newblock {\em Theor. Comput. Sci.}, 1034:115121, 2025.

\bibitem{Porat11}
Ely Porat and Amir Rothschild.
\newblock Explicit nonadaptive combinatorial group testing schemes.
\newblock {\em IEEE Transactions on Information Theory}, 57(12):7982--7989,
  2011.

\bibitem{Reyzin07}
Lev Reyzin and Nikhil Srivastava.
\newblock Learning and verifying graphs using queries with a focus on edge
  counting.
\newblock In {\em Proceedings of the Eighteenth International Conference on
  Algorithmic Learning Theory, {ALT} 2007}, volume 4754 of {\em Lecture Notes
  in Computer Science}, pages 285--297. Springer, 2007.

\bibitem{Rong21}
Guozhen Rong, Wenjun Li, Yongjie Yang, and Jianxin Wang.
\newblock Reconstruction and verification of chordal graphs with a distance
  oracle.
\newblock {\em Theor. Comput. Sci.}, 859:48--56, 2021.

\bibitem{Rong22}
Guozhen Rong, Yongjie Yang, Wenjun Li, and Jianxin Wang.
\newblock A divide-and-conquer approach for reconstruction of $\{C_{\ge
  5}\}$-free graphs via betweenness queries.
\newblock {\em Theor. Comput. Sci.}, 917:1--11, 2022.

\bibitem{Rose74}
Donald~J. Rose.
\newblock On simple characterizations of k-trees.
\newblock {\em Discret. Math.}, 7(3-4):317--322, 1974.

\bibitem{Sen10}
Sandeep Sen and V.~N. Muralidhara.
\newblock The covert set-cover problem with application to network discovery.
\newblock In {\em Proceedings of the 4th International Workshop on Algorithms
  and Computation, {WALCOM} 2010}, volume 5942 of {\em Lecture Notes in
  Computer Science}, pages 228--239. Springer, 2010.

\bibitem{Sopena97}
{\'{E}}ric Sopena.
\newblock The chromatic number of oriented graphs.
\newblock {\em J. Graph Theory}, 25(3):191--205, 1997.

\end{thebibliography}

\end{document}